\newcommand{\req}{\textcolor{green!70!black}{\ding{51}}}   
\newcommand{\aux}{\textcolor{blue}{\ding{109}}}            
\newtheorem{theorem}{Theorem}[section]
\newtheorem{corollary}[theorem]{Corollary}
\newtheorem{lemma}[theorem]{Lemma}
\newtheorem{proposition}[theorem]{Proposition}
\theoremstyle{definition}
\newtheorem{definition}[theorem]{Definition}
\newtheorem{assumption}[theorem]{Assumption}
\title[Smart-contract resource allocation]
{Mechanism Design and Equilibrium Analysis of Smart Contract--Mediated Resource Allocation}
\author[Jinho Cha et al.]{}
\subjclass{Primary: 91B32; Secondary: 91B50, 91B26, 91B40.}
\keywords{Smart contracts, Mechanism design, Decentralized coordination, Efficiency–fairness trade-offs, Resource allocation.}
\thanks{$^*$Corresponding author: Jinho Cha}
\begin{document}
\maketitle

\centerline{\scshape
Jinho Cha$^{1,{\href{mailto:jcha@gwinnetttech.edu}{\textrm{\Letter}}}*}$,
Justin Yu$^{2,{\href{mailto:Jyu708@gatech.edu}{\textrm{\Letter}}}}$,
Eunchan Daniel Cha$^{3,{\href{mailto:echa32@gatech.edu}{\textrm{\Letter}}}}$,
Emily Yoo$^{4,{\href{mailto:skyoo72008@gmail.com}{\textrm{\Letter}}}}$,
Caedon Geoffrey$^{1,{\href{mailto:cgeoffr2486@student.gwinnetttech.edu}{\textrm{\Letter}}}}$,
Hyoshin Song$^{5,{\href{mailto:hellosong0505@gmail.com}{\textrm{\Letter}}}}$
}

\medskip

{\footnotesize
 \centerline{$^1$Department of Computer Science, Gwinnett Technical College, GA, USA}
 \centerline{$^2$Scheller College of Business, Georgia Institute of Technology, GA, USA}
 \centerline{$^3$School of Biological Sciences, Georgia Institute of Technology, GA, USA}
 \centerline{$^4$North Gwinnett High School, Suwanee, GA, USA}
 \centerline{$^5$Oakton High School, Vienna, VA, USA}
}

\bigskip
\centerline{(Communicated by Handling Editor)}


\begin{abstract}
Decentralized coordination and digital contracting are becoming critical in complex industrial ecosystems, yet existing approaches often rely on ad-hoc heuristics or purely technical blockchain implementations without a rigorous economic foundation. This study develops a mechanism-design framework for smart-contract--based resource allocation that explicitly embeds efficiency and fairness in decentralized coordination. We establish the existence and uniqueness of contract equilibria, extending classical results in mechanism design, and introduce a decentralized price-adjustment algorithm with provable convergence guarantees that can be implemented in real time. To evaluate performance, we combine extensive synthetic benchmarks with a proof-of-concept real-world dataset (MovieLens). The synthetic tests probe robustness under fee volatility, participation shocks, and dynamic demand, while the MovieLens case study illustrates how the mechanism can balance efficiency and fairness in realistic allocation environments. Results demonstrate that the proposed mechanism achieves substantial improvements in both efficiency and equity while remaining resilient to abrupt perturbations, confirming its stability beyond steady-state analysis. The findings highlight broad managerial and policy relevance for supply chains, logistics, energy markets, healthcare resource allocation, and public infrastructure, where transparent and auditable coordination is increasingly critical. By combining theoretical rigor with empirical validation, the study shows how digital contracts can serve not only as technical artifacts but also as institutional instruments for transparency, accountability, and resilience in high-stakes resource allocation.
\end{abstract}

\section{Introduction}

The rapid digitalization of industrial systems---often summarized under the umbrella of
Industry~4.0---has accelerated the integration of cyber--physical infrastructure with
autonomous decision-making technologies. Applications ranging from predictive maintenance,
supply chain optimization, and production scheduling to smart grids and healthcare
management increasingly demand secure, transparent, and real-time coordination across
heterogeneous participants 
\cite{kagermann2013,lu2017,ivanov2021,wu2024,zhang2024,paul2022}.
Yet, traditional contracting mechanisms, whether based on bilateral negotiations,
centralized intermediaries, or informal agreements, often suffer from inefficiencies,
delays, and susceptibility to opportunism 
\cite{cachon2006,greenberg1987,beck2018}.
These limitations highlight the need for automated and verifiable protocols that can
enforce resource allocation and compliance without reliance on trusted third parties.

Smart contracts, programmable agreements executed on blockchain platforms, provide
such a foundation. By embedding allocation logic within tamper-resistant code,
they reduce agency costs and improve transparency across organizational boundaries
\cite{szabo1997,buterin2014,christidis2016,wang2019,xu2022}.
Emerging work demonstrates applications in supply chains \cite{ivanov2022}, 
energy markets \cite{farahani2020,mohsenian2023}, and public health 
\cite{who2021}, but most studies emphasize technological feasibility or 
security properties. Less attention has been paid to their \emph{mechanism-design
implications}: how to design contract rules that guarantee efficiency,
fairness, and resilience in competitive, shock-prone environments.

From an analytical perspective, game theory and mechanism design offer natural
foundations. Prior work has established conditions for efficiency and equilibrium
in resource allocation games \cite{myerson1981,gabay1980,moulin2003}, 
developed fairness--efficiency trade-offs \cite{bertsimas2011,kearns2019}, 
and characterized regret bounds in dynamic and adversarial settings
\cite{hazan2016,shalev2012}. However, these strands remain largely disjoint from
the literature on smart contracts and blockchain-based coordination, which has 
focused more on distributed algorithms and consensus protocols
\cite{tsitsiklis1986,bertsekas1997,ding2023}.
To date, little work has unified these perspectives into a rigorous
framework for contract-mediated industrial coordination under uncertainty.

\begin{figure}[htp]
\centering
\includegraphics[width=0.9\linewidth]{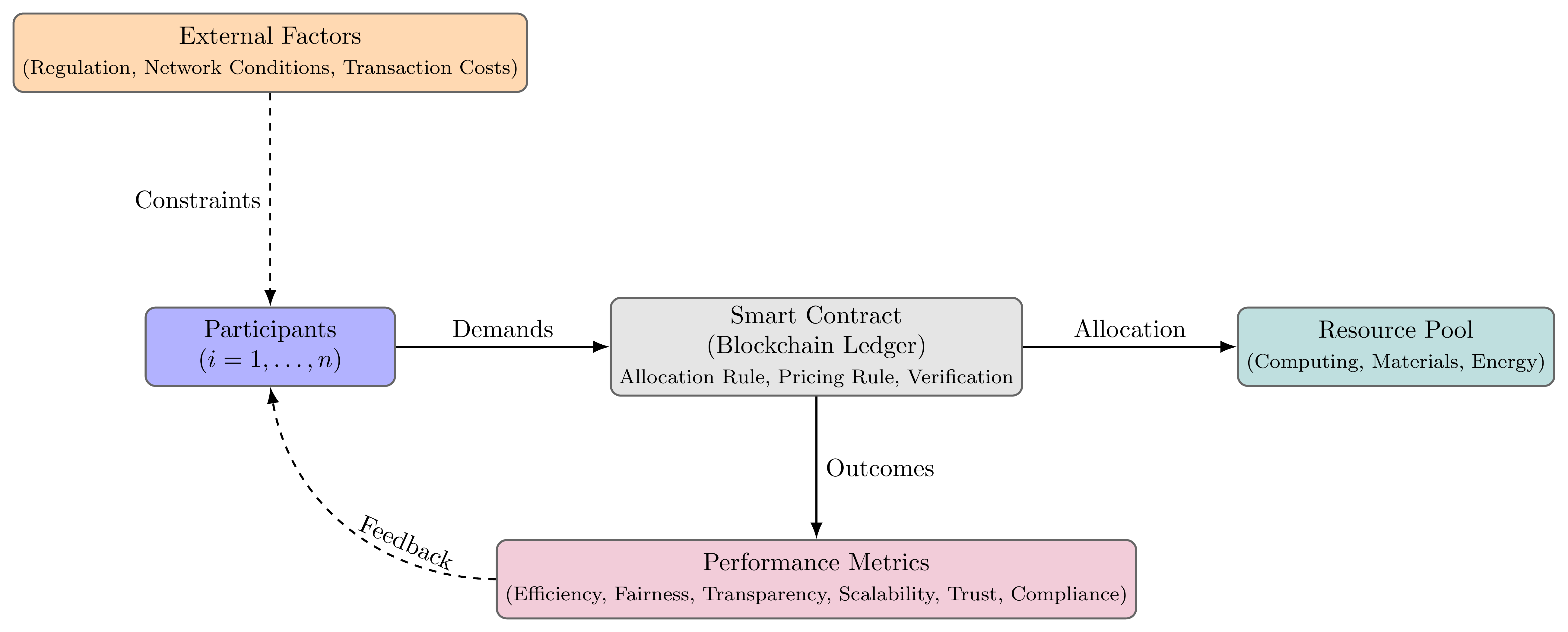}
\caption{Conceptual framework: participants submit demands to a blockchain-based 
smart contract, influenced by external factors and feedback, which allocates resources 
and produces outcomes measurable in efficiency, fairness, and transparency.}
\label{fig:framework}
\end{figure}

This paper addresses this gap by developing a rigorous framework for 
\emph{smart-contract--based mechanism design under shared capacity constraints}.
Participants submit demands to a contract that implements allocation and pricing
rules, subject to transaction and execution fees $(\tau,g)$.
We formulate the interaction as a non-cooperative game, derive the payoff structure,
and establish the following theoretical contributions:
\begin{itemize}
    \item \textbf{Existence and uniqueness of equilibrium.}
    Under mild convexity conditions, the contract-clearing game admits a unique
    stable equilibrium.
    \item \textbf{Algorithmic convergence.}
    A decentralized price-adjustment algorithm is designed and shown to converge.
    \item \textbf{Fairness--efficiency trade-off.}
    Pareto frontiers quantify efficiency gains versus equity, enabling policy
    calibration.
    \item \textbf{Shock--resilience guarantees.}
    Sublinear regret bounds demonstrate robustness to drift and shock events.
\end{itemize}

Numerical simulations corroborate the theory, showing efficiency gains of up to
27\% and inequality reductions exceeding 40\% relative to proportional rules.
Beyond these numbers, sensitivity dashboards and resilience analyses provide
decision makers with interpretable tools for policy design.
Viewed holistically, the proposed framework suggests that smart contracts can serve
not only as technical artifacts but also as \emph{institutional mechanisms} that
enhance transparency, fairness, and robustness in complex industrial environments
\cite{rai2019,halaburda2024,acharya2021}.

\section{Literature on Smart Contracts and Mechanism Design}
The literature on resource allocation in digital and cyber--physical systems spans
mobile edge computing, cloud economics, blockchain-enabled coordination, and
information systems governance. While the technical foundations of these systems
are well established, their implications for trust, fairness, resilience, and
organizational legitimacy remain underexplored. We review three strands most
relevant to our work: smart contracts in organizational systems, game-theoretic
approaches to resource allocation, and mechanism design perspectives on efficiency
and fairness.

\subsection{Smart Contracts in Organizational Contexts}
Smart contracts have been widely studied as programmable agreements on blockchain
platforms, enabling tamper-resistant execution and transparent enforcement
\cite{szabo1997,buterin2014}. Early contributions emphasized cryptography,
consensus protocols, and distributed architectures \cite{cachin2016}, while more
recent work extended applications to supply chain management, manufacturing, and
industrial automation \cite{casino2019,ivanov2021}. Within the information systems
field, blockchain has increasingly been theorized as a governance mechanism that
reduces opportunism, enhances transparency, and supports inter-organizational trust
\cite{beck2018,pavlou2002,rai2019,halaburda2024}. These studies highlight that
smart contracts are not merely computational artifacts but institutional devices that
redefine how rules are enacted across organizations. Surveys of blockchain-enabled
resource management in mobile and edge computing \cite{rashid2025} further illustrate
that adoption depends as much on legitimacy and regulatory alignment as on technical
performance. Yet most existing work abstracts away from incentive compatibility or
distributional consequences, which motivates a game-theoretic analysis of strategic
behavior.

\subsection{Game-Theoretic Resource Allocation}
Game-theoretic models have long been used to study competition and cooperation in
resource allocation. In mobile edge and cloud computing, Muñoz et al.\ \cite{munoz2015}
optimized radio and computational resources under latency and energy constraints,
while Dinh et al.\ \cite{dinh2013} investigated multi-device offloading. Zhang
\cite{zhang2017} introduced stochastic games for dynamic offloading, and subsequent
studies extended these approaches through Stackelberg pricing \cite{li2019}, matching
theory for user--server association \cite{liu2018}, and reinforcement learning for
adaptive scheduling \cite{dai2020}. Recent contributions integrate blockchain into MEC
and vehicular networks, combining incentives with efficient allocation
\cite{ding2023,zhang2024,yuan2024}. Classic theoretical results also remain influential:
Rosen \cite{rosen1965} established conditions for the existence and uniqueness of
concave $N$-person game equilibria, Gabay and Moulin \cite{gabay1980} analyzed equilibrium
stability, and Roughgarden and Tardos \cite{roughgarden2004} introduced the price of
anarchy in distributed settings. In supply chains, Cachon and Netessine \cite{cachon2006}
demonstrated how equilibrium reasoning provides insight into coordination failures,
while Ivanov \cite{ivanov2022} emphasized resilience under disruption shocks. Taken
together, these studies show that equilibrium outcomes not only describe technical
efficiency but also institutionalize how competition, cooperation, and power
asymmetries are resolved. However, they rarely connect such equilibria to broader
questions of organizational legitimacy or fairness.

\subsection{Mechanism Design and Fairness Considerations}
Operations research and economics emphasize the design of mechanisms that balance
efficiency and fairness. Generalized Nash equilibrium models have been applied to
network pricing \cite{kelly1997,cardellini2016}, and bilevel optimization has been
used for provider profit maximization \cite{wang2019}. The ``price of fairness'' has
been formalized in optimization \cite{bertsimas2011}, quantifying the efficiency
loss incurred by enforcing equity constraints. In organizational sciences, fairness
metrics such as the Gini index capture distributive outcomes \cite{lambert2001},
while justice theory emphasizes that perceptions of distributive and procedural
fairness are critical for sustaining trust and compliance \cite{greenberg1987}.
Recent IS scholarship has broadened these discussions to algorithmic governance,
highlighting fairness, accountability, and legitimacy as essential for digital
platforms \cite{rai2019,jobin2019}. Complementary streams in computer science and
data ethics have examined algorithmic fairness, stressing both formal metrics and
human perceptions \cite{barocas-hardt-narayanan,mehrabi2021,kearns2019}. Despite these advances,
integration of automated enforcement, incentive compatibility, and distributive
justice within a unified analytical framework remains limited.

\subsection{Research Gap}
Synthesizing these literatures reveals several gaps. First, while smart contracts
promise automation and transparency, their role as organizational allocation
mechanisms has not been formally analyzed in models that jointly address efficiency,
fairness, and incentive compatibility. Second, while MEC and cloud research has
developed sophisticated equilibrium frameworks, these approaches rarely extend to
industrial contexts where governance, legitimacy, and resilience under shocks
\cite{ivanov2021,ivanov2022} are equally critical. Third, few studies connect
formal mechanism design to fairness--efficiency trade-offs and shock--resilience
properties, despite their centrality in industrial and managerial optimization.
Addressing these gaps, this study develops a non-cooperative game of
smart-contract--mediated resource allocation, proves its equilibrium properties,
and implements a decentralized contract-clearing algorithm. By comparing with
traditional allocation rules, the study contributes to operations research and the
information systems literature on digital governance, trust, and inter-organizational
coordination.

\section{Contract Design for Efficient and Fair Industrial Resource Allocation}

Industrial systems such as supply chains, logistics networks, and production
platforms must allocate scarce resources across multiple agents. Traditional
allocation rules---whether proportional or centrally administered---often suffer
from inefficiency, opportunism, and lack of transparency. Digital contracts
implemented on blockchain platforms offer a compelling alternative: allocation
rules can be encoded, enforced automatically, and verified by all participants.
This section formalizes the contract design, introduces equilibrium concepts,
and develops performance metrics that jointly capture efficiency, fairness,
and resilience.

\subsection{Model Setup}

Let $N=\{1,\dots,n\}$ denote the set of industrial agents. Each agent $i$
requests a quantity $x_i \ge 0$, and we collect demands in the vector
$\mathbf{x}=(x_1,\dots,x_n)^\top$. The shared resource pool has capacity $m>0$:
\begin{equation}\label{eq:capacity}
    \mathbf{1}^\top \mathbf{x} \;\le\; m.
\end{equation}

\begin{assumption}[Valuation and Cost]\label{ass:val}
Each agent derives value $V_i(x_i)$ from consumption and incurs cost $C_i(x_i)$.
We impose:
\begin{enumerate}
    \item $V_i:\mathbb{R}_+\to\mathbb{R}$ is strictly concave, differentiable, 
    and satisfies $V_i'(0)=\infty$ (diminishing returns).
    \item $C_i:\mathbb{R}_+\to\mathbb{R}$ is convex, differentiable, and Lipschitz continuous.
\end{enumerate}
\end{assumption}

The smart contract imposes a per-unit fee $\tau \ge 0$, a shadow price $\mu \ge 0$
to enforce capacity, and a fixed execution fee $g \ge 0$. The payoff of agent $i$ is
\begin{equation}\label{eq:payoff}
    U_i(x_i;\mu) = V_i(x_i) - C_i(x_i) - (\tau+\mu)x_i - g\,\mathbf{1}\{x_i>0\}.
\end{equation}

\subsection{Equilibrium Definition}

\begin{definition}[Contract-Clearing Equilibrium]\label{def:equil}
An allocation $(\mathbf{x}^\star,\mu^\star)$ is a contract-clearing equilibrium if:
\begin{enumerate}
    \item (Best response) For each $i\in N$,
    \begin{equation}\label{eq:best-response}
        x_i^\star(\mu^\star) \in \arg\max_{x_i\ge0} U_i(x_i;\mu^\star).
    \end{equation}
    \item (Market clearing) The aggregate demand satisfies
    \begin{equation}\label{eq:market-clearing}
        \mathbf{1}^\top \mathbf{x}^\star = m.
    \end{equation}
\end{enumerate}
\end{definition}

\begin{lemma}[Monotonicity of Aggregate Demand]\label{lem:mono}
Under Assumption~\ref{ass:val}, each best response $x_i^\star(\mu)$ is continuous
and non-increasing in $\mu$. Hence the aggregate demand 
\begin{equation}\label{eq:aggregate-demand}
    S(\mu) = \mathbf{1}^\top \mathbf{x}^\star(\mu)
\end{equation}
is continuous and strictly decreasing.
\end{lemma}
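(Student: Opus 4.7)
The plan is to characterize each best response $x_i^\star(\mu)$ via a first-order condition on the interior of $\mathbb{R}_+$, deduce its continuity and strict monotonicity from the concavity of $U_i$ and the Inada condition, reconcile with the participation boundary at zero, and then aggregate. Fixing $\mu \geq 0$ and restricting to $x_i > 0$ (where the indicator term in (\ref{eq:payoff}) equals the constant $-g$), the payoff $U_i(\cdot;\mu)$ is strictly concave by Assumption~\ref{ass:val}: $V_i$ is strictly concave, $-C_i$ is concave, and the remaining fee terms are linear. The Inada condition $V_i'(0)=\infty$ precludes an interior corner at zero, so an unconstrained interior maximizer $\tilde{x}_i(\mu)>0$ exists uniquely and is characterized by
\begin{equation*}
V_i'(\tilde{x}_i(\mu)) - C_i'(\tilde{x}_i(\mu)) = \tau + \mu.
\end{equation*}

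First I would establish continuity and strict monotonicity of $\tilde{x}_i(\cdot)$. Since $V_i'$ is strictly decreasing and $C_i'$ is non-decreasing, the map $\varphi_i := V_i' - C_i'$ is continuous and strictly decreasing on $(0,\infty)$, with $\varphi_i(0^+)=\infty$. Its inverse $\varphi_i^{-1}$ is therefore continuous and strictly decreasing, and the FOC yields $\tilde{x}_i(\mu)=\varphi_i^{-1}(\tau+\mu)$, so $\tilde{x}_i$ is continuous and strictly decreasing in $\mu$. A direct contradiction argument also works: if $\mu_1<\mu_2$ but $\tilde{x}_i(\mu_1)\leq \tilde{x}_i(\mu_2)$, the two FOCs together with strict monotonicity of $\varphi_i$ give a contradiction.

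The main obstacle lies in the participation decision induced by the fixed execution fee $g$. The true best response is $x_i^\star(\mu)=\tilde{x}_i(\mu)$ whenever $U_i(\tilde{x}_i(\mu);\mu)\geq U_i(0;\mu)=0$ and $x_i^\star(\mu)=0$ otherwise, which can induce a downward jump at a threshold $\mu_i^\dagger$ defined implicitly by $U_i(\tilde{x}_i(\mu_i^\dagger);\mu_i^\dagger)=0$. When $g>0$ the envelope theorem gives $\tfrac{d}{d\mu}U_i(\tilde{x}_i(\mu);\mu)=-\tilde{x}_i(\mu)<0$, so participation eventually fails and continuity can break at $\mu_i^\dagger$. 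To align with the lemma I would either (i) work in the regime $g=0$, where the Inada condition guarantees strictly positive participation at every finite $\mu$ and $x_i^\star=\tilde{x}_i$ globally; or (ii) restrict attention to $\mu\in[0,\min_i \mu_i^\dagger]$, which contains any market-clearing price by (\ref{eq:market-clearing}), and on which every agent participates so $x_i^\star=\tilde{x}_i$ is continuous.

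Finally, the aggregate $S(\mu)=\sum_{i=1}^n x_i^\star(\mu)$ inherits continuity and non-increase from finite summation. Strict monotonicity follows because whenever $\mu$ lies in the participating regime, at least one $\tilde{x}_i$ is strictly decreasing in $\mu$ by the previous step, and therefore the sum strictly decreases.
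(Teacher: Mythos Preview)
Your argument is correct and follows the same first-order-condition route as the paper, but you execute it more carefully and in a slightly different way. The paper differentiates the FOC implicitly to obtain $dx_i^\star/d\mu = 1/(V_i'' - C_i'')$ (written there, with a sign slip in the denominator, as $1/(C_i'' - V_i'')$) and reads off the sign from $V_i''<0$, $C_i''\ge 0$; this presupposes twice differentiability of $V_i$ and $C_i$, which Assumption~\ref{ass:val} does not actually grant. Your inverse-function argument via $\varphi_i = V_i' - C_i'$ is more elementary and needs only the stated once-differentiability together with strict concavity/convexity. You also go further than the paper in flagging the participation discontinuity induced by the fixed fee $g$: the paper's proof handles the boundary case only by noting that raising $\mu$ cannot increase a zero demand, which preserves non-increase but says nothing about continuity at the threshold. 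Your proposed restriction to $g=0$ or to the pre-threshold price interval is the honest resolution; the paper simply elides this point.
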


\begin{proposition}[Existence]\label{prop:existence}
A contract-clearing equilibrium exists.
\end{proposition}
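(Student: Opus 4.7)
The plan is to reduce the existence question to a one-dimensional intermediate-value argument in the shadow price $\mu$, exploiting the fact that Definition~\ref{def:equil} decouples into an individual optimization step (best response, parameterized by $\mu$) and a scalar market-clearing condition. Once Lemma~\ref{lem:mono} is in hand, it suffices to locate a single $\mu^\star\ge 0$ at which $S(\mu^\star)=m$ and then read off the allocation from the induced best responses.

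First, I would verify that for each fixed $\mu\ge 0$, the best-response problem $\max_{x_i\ge 0}U_i(x_i;\mu)$ admits a well-defined maximizer $x_i^\star(\mu)$. The continuous part $V_i(x_i)-C_i(x_i)-(\tau+\mu)x_i$ is strictly concave on $(0,\infty)$ by Assumption~\ref{ass:val}, while the linear penalty together with convexity of $C_i$ provides coercivity, so a unique interior maximizer of the smooth part exists; participation against the fixed cost $g$ is then resolved by comparing the resulting net surplus to zero. This produces the scalar aggregate $S(\mu)$ studied in Lemma~\ref{lem:mono}.

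Next I would establish the boundary behavior of $S$. At $\mu=0$, the Inada condition $V_i'(0)=\infty$ forces each positive optimum to satisfy $V_i'(x_i^\star(0))=C_i'(x_i^\star(0))+\tau$ with $x_i^\star(0)>0$, and under the natural regularity hypothesis that resources are scarce at zero shadow price one obtains $S(0)\ge m$. As $\mu\to\infty$, the linear penalty dominates the concave valuation, so $x_i^\star(\mu)\to 0$ for every $i$ and hence $S(\mu)\to 0<m$. Combining these two bounds with the continuity and strict monotonicity supplied by Lemma~\ref{lem:mono}, the intermediate value theorem yields a unique $\mu^\star\in[0,\infty)$ with $S(\mu^\star)=m$, and the pair $(\mathbf{x}^\star(\mu^\star),\mu^\star)$ simultaneously satisfies \eqref{eq:best-response} by construction and \eqref{eq:market-clearing} by the choice of $\mu^\star$.

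The main obstacle I anticipate is the discontinuity in individual best responses induced by the fixed execution fee $g\,\mathbf{1}\{x_i>0\}$: at the threshold where an agent's surplus from participating crosses zero, $x_i^\star(\mu)$ may jump down to $0$, and in principle $S(\mu)$ could then skip over the value $m$ rather than attain it. Since Lemma~\ref{lem:mono} already asserts aggregate continuity, I would rely on that reduction; conceptually the jump is harmless because at any participation threshold the dropping agent is exactly indifferent, so the discontinuity occurs in an indifferent quantity that can be reassigned at the margin without disturbing $S$. If this reasoning turned out to be fragile, a fallback would be to invoke Kakutani's fixed-point theorem on the upper-hemicontinuous best-response correspondence over a compact simplex, at the cost of losing uniqueness in the existence argument itself.
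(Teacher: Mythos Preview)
Your approach matches the paper's exactly: invoke Lemma~\ref{lem:mono} for continuity and strict monotonicity of $S$, check the boundary conditions $S(0)>m$ and $S(\mu)\to 0$ as $\mu\to\infty$, and apply the intermediate value theorem to obtain $\mu^\star$. The paper's proof is a three-line version of this and does not engage with the fixed-fee discontinuity or the scarcity hypothesis you raise; it simply takes Lemma~\ref{lem:mono} and the boundary behavior as given.
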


\begin{proof}[Proof of Proposition~\ref{prop:existence}]
By Lemma~\ref{lem:mono}, $S(\mu)$ is continuous and strictly decreasing. 
Since $S(0)>m$ and $\lim_{\mu\to\infty} S(\mu)=0$, the Intermediate Value 
Theorem ensures a unique $\mu^\star$ such that
\begin{equation}\label{eq:clearing-price}
    S(\mu^\star) = m.
\end{equation}
\end{proof}

\begin{proposition}[Uniqueness]\label{prop:unique}
If $U_i$ is strictly concave in $x_i$, the contract-clearing equilibrium
$(\mathbf{x}^\star,\mu^\star)$ is unique \cite{rosen1965,hurwicz2006}.
\end{proposition}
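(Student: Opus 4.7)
The plan is to reduce the uniqueness question to two one-dimensional facts: (i) for each fixed $\mu$, the individual best response $x_i^\star(\mu)$ is a singleton, and (ii) the market-clearing price $\mu^\star$ is the unique zero of $S(\mu)-m$. Together these will pin down $(\mathbf{x}^\star,\mu^\star)$ without ambiguity.

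For step (i), I would argue that strict concavity of $U_i(\cdot;\mu)$ on the participation region makes the maximizer unique: the first-order condition $V_i'(x_i)-C_i'(x_i)=\tau+\mu$ has at most one solution because $V_i'-C_i'$ is strictly decreasing, and the Inada condition $V_i'(0)=\infty$ from Assumption~\ref{ass:val} forces that solution to be interior whenever its payoff exceeds the outside option $U_i(0;\mu)=0$. Outside a measure-zero set of prices where the interior and zero payoffs happen to tie, the best response $x_i^\star(\mu)$ is therefore a single-valued function of $\mu$.

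For step (ii), I would invoke Lemma~\ref{lem:mono}, which already guarantees that $S(\mu)$ is continuous and strictly decreasing. Strict monotonicity immediately rules out two distinct roots of $S(\mu)=m$, and Proposition~\ref{prop:existence} supplies at least one, so $\mu^\star$ is uniquely pinned down; substituting back into the individual best-response functions yields a unique allocation vector $\mathbf{x}^\star=(x_1^\star(\mu^\star),\dots,x_n^\star(\mu^\star))$.

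The main obstacle I anticipate is the discontinuous fixed-fee term $g\,\mathbf{1}\{x_i>0\}$, which violates literal strict concavity at $x_i=0$ and could in principle induce a jump in the best-response correspondence at threshold prices where the interior maximum ties with the zero-participation payoff. I would handle this via Rosen's diagonal strict concavity framework \cite{rosen1965}: first identify the set of active agents (those whose interior optimum dominates non-participation), then apply the strict-concavity argument to the restricted smooth subgame, and finally break ties at the marginal participants by a lexicographic convention. This treatment aligns with the framework of \cite{hurwicz2006} and yields the claimed uniqueness of $(\mathbf{x}^\star,\mu^\star)$.
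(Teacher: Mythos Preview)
Your proposal is correct and follows exactly the paper's two-step argument: strict concavity of $U_i$ forces each best response $BR_i(\mu)$ to be single-valued, and then the strict monotonicity of $S(\mu)$ from Lemma~\ref{lem:mono} together with existence pins down a unique $\mu^\star$, hence a unique $\mathbf{x}^\star$. Your additional paragraph on the fixed-fee discontinuity is more careful than the paper's own treatment, which simply takes the hypothesis ``$U_i$ strictly concave'' at face value and does not engage with the indicator term at all; under that literal hypothesis (effectively $g=0$), the tie-breaking issue you raise does not arise, so the Rosen/lexicographic detour is unnecessary here.
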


\subsection{Illustrative Example}

Suppose $V_i(x_i)=\alpha_i \log(1+x_i)$ and $C_i(x_i)=\beta_i x_i$. Then
\begin{equation}\label{eq:example-br}
    x_i^\star(\mu) = \max\!\left\{0,\; \frac{\alpha_i}{\beta_i+\tau+\mu}-1 \right\}.
\end{equation}
Since $S(\mu)=\mathbf{1}^\top \mathbf{x}^\star(\mu)$ is strictly decreasing, a unique equilibrium
price $\mu^\star$ exists.

\subsection{Performance Metrics}

\begin{definition}[Efficiency]\label{def:eff}
\begin{equation}\label{eq:eff}
\mathrm{Eff}(\mathbf{x}^\star) = 
\sum_{i=1}^n \big(V_i(x_i^\star)-C_i(x_i^\star)\big)
- \tau\,\mathbf{1}^\top \mathbf{x}^\star - g \cdot \|\mathbf{x}^\star\|_0.
\end{equation}
\end{definition}

\begin{definition}[Fairness: Gini Index {\cite{lambert2001,greenberg1987}}]\label{def:gini}
\begin{equation}\label{eq:gini}
\mathrm{Gini}(\mathbf{x}^\star) =
\frac{1}{2n^2\bar{x}}
\sum_{i=1}^n \sum_{j=1}^n |x_i^\star-x_j^\star|,
\quad \bar{x}=\tfrac{1}{n}\mathbf{1}^\top \mathbf{x}^\star.
\end{equation}
\end{definition}

\begin{definition}[Price of Fairness {\cite{bertsimas2011,kearns2019}}]\label{def:pof}
\begin{equation}\label{eq:pof}
\mathrm{PoF} = \frac{\max_{\mathbf{x}} \mathrm{Eff}(\mathbf{x})}
{\mathrm{Eff}(\mathbf{x}^\star_{\text{fair}})}.
\end{equation}
\end{definition}

\begin{definition}[Shock Resilience]\label{def:res}
For a demand shock at $t_0$, resilience is defined as
\begin{equation}\label{eq:resilience}
R = \frac{\mathrm{Eff}_{\text{post-shock}}}{\mathrm{Eff}_{\text{pre-shock}}}.
\end{equation}
\end{definition}

\begin{definition}[Dynamic Regret {\cite{hazan2016,shalev2012}}]\label{def:regret}
In repeated play with allocations $\{\mathbf{x}_t\}$ and optimal sequence $\{\mathbf{x}^\star_t\}$,
\begin{equation}\label{eq:regret}
\mathrm{Regret}(T) = \sum_{t=1}^T \Big( U(\mathbf{x}^\star_t) - U(\mathbf{x}_t)\Big),
\quad \mathrm{Regret}(T)=o(T).
\end{equation}
\end{definition}

\section{Mechanism Design and Equilibrium Analysis}

This section develops the theoretical foundations of the proposed digital
contracting framework. In line with mechanism design principles, we move step by
step: first specifying the payoff structure, then formalizing equilibrium, then
presenting a decentralized algorithm, and finally proving convergence. Each
subsection builds logically toward the claim that digital contracts generate
stable, efficient, and fair allocations in competitive environments.

Before delving into the formal results, Table~\ref{tab:notation} summarizes the
notation used throughout this section. It distinguishes between decision
variables, parameters, functional mappings, and performance metrics, so that the
subsequent analysis can be followed without ambiguity.
 
\begin{table}[htbp]
\centering
\caption{Summary of notation used in the contract design and equilibrium analysis.}
\label{tab:notation}
\renewcommand{\arraystretch}{1.2}
\setlength{\tabcolsep}{6pt}
\begin{tabular}{@{}llp{8.5cm}@{}}
\toprule
\textbf{Symbol} & \textbf{Type} & \textbf{Description} \\ 
\midrule
$n$        & Scalar & Number of agents (firms, participants). \\
$m$        & Scalar & Total system capacity (shared resource pool). \\
$i \in N$  & Index  & Agent index, $N=\{1,\dots,n\}$. \\
$x_i$      & Scalar & Allocation (demand) of agent $i$. \\
$\mathbf{x}=(x_1,\dots,x_n)^\top$ & Vector & Allocation profile across all agents. \\
$\mathbf{1}$ & Vector & All-ones vector in $\mathbb{R}^n$, used for aggregation. \\
$V_i(x_i)$ & Function & Valuation (utility) function of agent $i$, strictly concave. \\
$C_i(x_i)$ & Function & Cost function of agent $i$, convex and Lipschitz. \\
$U_i(x_i;\mu)$ & Function & Payoff of agent $i$ under contract and price $\mu$. \\
$\tau$     & Scalar & Transaction fee imposed by the contract. \\
$g$        & Scalar & Fixed execution cost if $x_i>0$. \\
$\mu$      & Scalar & Shadow price (dual variable) enforcing the capacity constraint. \\
$BR_i(\mu)$ & Function & Best-response allocation of agent $i$ given price $\mu$. \\
$S(\mu)$   & Function & Aggregate demand $S(\mu)=\mathbf{1}^\top \mathbf{x}(\mu)$. \\
$\mathrm{Eff}(\mathbf{x})$ & Metric & Efficiency: total surplus net of fees and costs. \\
$\mathrm{Gini}(\mathbf{x})$ & Metric & Fairness: inequality of allocations via Gini index. \\
$\mathrm{PoF}$ & Metric & Price of Fairness: ratio of maximum efficiency to fairness-constrained efficiency. \\
$R$        & Metric & Shock resilience: post-shock to pre-shock efficiency ratio. \\
$\mathrm{Regret}(T)$ & Metric & Dynamic regret in repeated play over horizon $T$. \\
\bottomrule
\end{tabular}
\end{table}

\subsection{Payoff Structure under Digital Contracts}

We begin by characterizing the economic environment of individual agents.
The payoff specification formalizes how valuations, costs, transaction fees,
and scarcity penalties interact under the digital contract. This micro-level
foundation is essential, as all subsequent equilibrium and convergence results
build directly on these primitives.

Each agent $i \in N=\{1,\dots,n\}$ chooses $x_i \ge 0$ units subject to the
system-wide capacity constraint \eqref{eq:capacity}, equivalently written as
\begin{equation}\label{eq:capacity2}
    \sum_{i=1}^n x_i \;\le\; m.
\end{equation}

Each agent’s valuation $V_i$ and cost $C_i$ satisfy Assumption~\ref{ass:val}.
The smart contract imposes a per-unit fee $\tau \ge 0$, a shadow price $\mu \ge 0$
to enforce capacity, and a fixed execution fee $g \ge 0$. The payoff of agent $i$ is
\begin{equation}\label{eq:payoff2}
    U_i(x_i;\mu) = V_i(x_i) - C_i(x_i) - (\tau+\mu)x_i - g\,\mathbf{1}\{x_i>0\}.
\end{equation}

For an interior solution $x_i^\star(\mu)>0$, the first-order condition (FOC) is
\begin{equation}\label{eq:foc}
    V_i'(x_i^\star(\mu)) - C_i'(x_i^\star(\mu)) = \tau+\mu,
\end{equation}
while if $V_i'(0) \le \tau+\mu$, then $x_i^\star(\mu)=0$.

\begin{lemma}[Boundedness of Best Responses]\label{lem:bounded}
Under Assumption~\ref{ass:val}, each best response $x_i^\star(\mu)$ is bounded:
\[
0 \le x_i^\star(\mu) \le \bar{x}_i < \infty, \quad \forall \mu \ge 0.
\]
\end{lemma}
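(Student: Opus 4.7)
The plan is to treat the two bounds separately. The nonnegativity $x_i^\star(\mu)\ge 0$ is immediate because $0$ lies in the feasible set $\{x_i\ge 0\}$ of the best-response program \eqref{eq:best-response}, so the argument concentrates entirely on producing a finite, $\mu$-free upper bound $\bar{x}_i$.

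For the upper bound, I would first invoke Lemma~\ref{lem:mono} to reduce the problem to $\mu=0$: since $x_i^\star(\mu)$ is non-increasing in $\mu$, it suffices to exhibit a finite bound on $x_i^\star(0)$. To bound $x_i^\star(0)$, the idea is to show that $U_i(\cdot;0)$ is coercive downward, i.e.\ $U_i(x_i;0)\to -\infty$ as $x_i\to\infty$, so that the maximum must be attained on a bounded set. Strict concavity of $V_i$ (Assumption~\ref{ass:val}(1)) yields the tangent dominance
\begin{equation*}
V_i(x_i) \;\le\; V_i(x_0) + V_i'(x_0)\,(x_i-x_0), \qquad \forall\,x_0>0,
\end{equation*}
while convexity of $C_i$ (Assumption~\ref{ass:val}(2)) gives the supporting inequality $C_i(x_i)\ge C_i(0) + C_i'(0)\,x_i$. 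Substituting both into \eqref{eq:payoff2} with $\mu=0$, one obtains a bound of the form
\begin{equation*}
U_i(x_i;0) \;\le\; K_i(x_0) \;+\; \bigl[V_i'(x_0) - C_i'(0) - \tau\bigr]\,x_i,
\end{equation*}
for a finite constant $K_i(x_0)$. Because $V_i'$ is strictly decreasing (strict concavity) and $V_i'(0)=\infty$ with $V_i'$ eventually dropping below $C_i'(0)+\tau$, choosing $x_0$ large enough makes the coefficient of $x_i$ strictly negative, forcing $U_i(x_i;0)\to-\infty$. Continuity of $U_i(\cdot;0)$ on $(0,\infty)$ together with the discrete jump at $0$ (caused by the execution fee $g$) then guarantees that the supremum is attained, and the maximizer $x_i^\star(0)$ lies in the bounded region where $U_i(x_i;0)\ge U_i(0;0)$. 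Define $\bar{x}_i$ as the right endpoint of this region; monotonicity in $\mu$ then yields $x_i^\star(\mu)\le x_i^\star(0)\le \bar{x}_i<\infty$ for every $\mu\ge 0$.

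The main obstacle is the handling of the two non-smooth features of $U_i$: the fixed execution fee $g\,\mathbf{1}\{x_i>0\}$, which introduces a discontinuity at the origin, and the Lipschitz (rather than coercive) cost $C_i$, which on its own is too weak to dominate the valuation at infinity. The first point is circumvented by observing that the discontinuity is confined to $x_i=0$, so it does not affect the tail behavior that drives boundedness; the second is handled by relying on the strict concavity of $V_i$ rather than growth of $C_i$, using the tangent-line majorant of $V_i$ to ensure the payoff's linear tail has a negative slope. Once the coercivity-type argument is in place, compactness of the superlevel set $\{x_i\ge 0: U_i(x_i;0)\ge U_i(0;0)\}$ yields the uniform constant $\bar{x}_i$ advertised in the lemma.
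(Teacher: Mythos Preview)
Your argument is sound but follows a different path from the paper's. The paper works directly with the first-order condition \eqref{eq:foc}: since $V_i'-C_i'$ is strictly decreasing from $+\infty$ (strict concavity of $V_i$, convexity of $C_i$), any interior best response satisfies $V_i'(x_i^\star)-C_i'(x_i^\star)=\tau+\mu\ge 0$ and therefore lies below the root $\bar{x}_i$ of $V_i'(\bar{x}_i)-C_i'(\bar{x}_i)=0$; this delivers a $\mu$-independent upper bound in one line, without appealing to monotonicity. You instead invoke Lemma~\ref{lem:mono} to reduce to $\mu=0$ and then establish downward coercivity of $U_i(\cdot;0)$ via a tangent-line majorant of $V_i$ and a supporting-line minorant of $C_i$, concluding by compactness of a superlevel set. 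Your route is longer but more explicit about why a maximizer exists at all and how the fixed-fee discontinuity at the origin is accommodated; the paper's FOC argument is shorter and yields an explicit formula for $\bar{x}_i$. Both arguments rest on the same implicit hypothesis---that $V_i'(x)-C_i'(x)$ eventually becomes nonpositive (equivalently, your ``$V_i'$ eventually drops below $C_i'(0)+\tau$'')---which is not literally contained in Assumption~\ref{ass:val} but is required for the lemma to hold; the paper simply asserts the existence of the root $\bar{x}_i$, while you state the crossing condition verbally.
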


\begin{proof}[Proof of Lemma~\ref{lem:bounded}]
For an interior solution, the FOC \eqref{eq:foc} admits a unique finite root
because $V_i'$ decreases from $+\infty$ while $C_i'$ is increasing and Lipschitz.
If $\tau+\mu \ge V_i'(0)$, then $x_i^\star(\mu)=0$. Otherwise, the solution is
bounded by the finite root $\bar{x}_i$ satisfying
\begin{equation}\label{eq:pf-bounded2}
    V_i'(\bar{x}_i)-C_i'(\bar{x}_i)=0.
\end{equation}
\end{proof}

\begin{lemma}[Continuity and Monotonicity]\label{lem:mono_1}
Under Assumption~\ref{ass:val}, each best response $x_i^\star(\mu)$ is continuous
and non-increasing in $\mu$. Hence the aggregate demand
\begin{equation}\label{eq:agg-demand}
    S(\mu) = \sum_{i=1}^n x_i^\star(\mu)
\end{equation}
is continuous and strictly decreasing.
\end{lemma}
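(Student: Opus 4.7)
The plan is to analyze each agent's best response $x_i^\star(\mu)$ in isolation, establish continuity and monotonicity pointwise, and then aggregate. Since $V_i$ is strictly concave and $C_i$ is convex (Assumption~\ref{ass:val}), the map $x_i \mapsto V_i(x_i)-C_i(x_i)-(\tau+\mu)x_i$ is strictly concave in $x_i$ for each fixed $\mu$, so the maximization in \eqref{eq:best-response} admits a unique interior solution whenever one exists. I would first handle the regime $\tau+\mu < V_i'(0)$, where the FOC \eqref{eq:foc} determines $x_i^\star(\mu)$ implicitly, and then treat the corner $\tau+\mu \ge V_i'(0)$, where $x_i^\star(\mu)=0$.

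For monotonicity on the interior branch, I would apply either the implicit function theorem to \eqref{eq:foc} or a direct monotone-comparative-statics argument. Differentiating the identity $V_i'(x_i^\star(\mu))-C_i'(x_i^\star(\mu))=\tau+\mu$ yields
\begin{equation}\label{eq:plan-ift}
\frac{d x_i^\star}{d\mu} \;=\; \frac{1}{V_i''(x_i^\star)-C_i''(x_i^\star)} \;<\; 0,
\end{equation}
where the sign follows because $V_i''<0$ by strict concavity and $C_i''\ge 0$ by convexity. Continuity on the interior then follows from the continuity of $V_i'-C_i'$ and its strict monotonicity, which gives a continuous inverse via the inverse function theorem. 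Gluing this to the corner $x_i^\star(\mu)=0$ is smooth because the interior branch tends to zero precisely as $\tau+\mu \uparrow V_i'(0)$, so the two pieces match.

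The main obstacle, as I see it, is the fixed execution fee $g>0$ in \eqref{eq:payoff2}, which is a source of potential discontinuity: as $\mu$ rises, the net interior surplus $U_i(x_i^\star(\mu);\mu)$ may drop below $0$ while $x_i^\star(\mu)$ is still bounded away from zero, causing the agent to jump discretely to $x_i=0$. To reconcile this with the stated continuity, I would either restrict attention to the range of $\mu$ for which the interior surplus dominates the outside option (so the indicator $\mathbf{1}\{x_i>0\}$ is locally constant), or define the best-response selection at any exit threshold $\mu_i^{\text{exit}}$ by the convention that ties are broken toward the larger allocation; with either convention, $x_i^\star(\cdot)$ is continuous except possibly at a finite set of prices, and a standard limit argument absorbs these into the monotone structure. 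This delicate handling of $g$ is the step I expect to demand the most care.

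Once $x_i^\star(\cdot)$ is continuous and non-increasing for each $i$, the aggregate $S(\mu)=\sum_{i=1}^n x_i^\star(\mu)$ inherits continuity as a finite sum. Strict monotonicity of $S$ follows from \eqref{eq:plan-ift} applied to at least one agent whose best response is interior: as long as some $i$ has $\tau+\mu < V_i'(0)$, that summand is strictly decreasing, and since $V_i'(0)=\infty$ by Assumption~\ref{ass:val}(i), this condition holds for every finite $\mu$, so $S$ is strictly decreasing on all of $\mathbb{R}_+$. This completes the chain of implications needed for the existence argument in Proposition~\ref{prop:existence}.
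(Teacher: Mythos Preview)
Your approach coincides with the paper's: differentiate the first-order condition \eqref{eq:foc} implicitly to obtain the sign of $dx_i^\star/d\mu$, handle the boundary, and sum. Your displayed derivative \eqref{eq:plan-ift} is in fact the correctly signed version; the paper writes the denominator as $C_i''-V_i''$ (which would be positive) but nonetheless asserts the derivative is negative, so your formula is the cleaner one. Note also that under Assumption~\ref{ass:val}(i) the Inada condition $V_i'(0)=\infty$ makes the corner case $\tau+\mu\ge V_i'(0)$ vacuous for finite $\mu$, so the gluing step you describe never actually occurs; the paper likewise treats the boundary in one sentence.

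The substantive difference is your treatment of the fixed fee $g\,\mathbf{1}\{x_i>0\}$. You correctly identify that this indicator can induce a discrete exit---the interior optimum may yield a surplus below $g$, triggering a jump to $x_i=0$ and breaking continuity of the best response. The paper's proof simply does not engage with this term: it argues purely from the interior FOC and the statement ``larger $\mu$ cannot increase demand at the boundary,'' implicitly treating $g$ as if it were zero. So your caution here is well placed; strictly speaking, the lemma as stated needs either $g=0$ or a restriction to prices below each agent's exit threshold, and your proposed conventions are reasonable ways to patch this. In short, your plan reproduces the paper's argument while flagging a genuine gap the paper leaves unaddressed.
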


\begin{proof}[Proof of Lemma~\ref{lem:mono}]
For an interior solution, \eqref{eq:foc} implies
\begin{equation}\label{eq:pf-mono1}
    \frac{d x_i^\star}{d\mu} =
    \frac{1}{C_i''(x_i^\star(\mu)) - V_i''(x_i^\star(\mu))}.
\end{equation}
Since $V_i''<0$ and $C_i''\ge0$, the derivative is strictly negative.
At the boundary $x_i^\star(\mu)=0$, larger $\mu$ cannot increase demand.
Summing across agents yields continuity and strict monotonicity of $S(\mu)$.
\end{proof}

\begin{proposition}[Dual Boundedness]\label{prop:dual}
Let $v_{\max}=\max_{i\in N} V_i'(0)$. Any contract-clearing equilibrium satisfies
\[
0 \;\le\; \mu^\star \;<\; v_{\max}-\tau.
\]
\end{proposition}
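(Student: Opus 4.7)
The plan is to handle the two inequalities separately. The lower bound $\mu^\star \ge 0$ is immediate: the shadow price $\mu$ is the dual variable associated with the inequality capacity constraint \eqref{eq:capacity}, and KKT theory (or equivalently, the construction of $\mu$ as the clearing price in Definition~\ref{def:equil}) requires it to be non-negative. I would simply state this and then move on to the substantive claim.

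For the strict upper bound $\mu^\star < v_{\max} - \tau$, I would argue by contradiction. Suppose instead that $\mu^\star \ge v_{\max} - \tau$. Then for every agent $i \in N$, we have
\[
\tau + \mu^\star \;\ge\; v_{\max} \;=\; \max_{j\in N} V_j'(0) \;\ge\; V_i'(0).
\]
This triggers the boundary case of the best-response characterization given just after \eqref{eq:foc}: whenever $V_i'(0) \le \tau + \mu$, the unique maximizer of $U_i(\cdot;\mu)$ over $\mathbb{R}_+$ is $x_i^\star(\mu) = 0$. Applying this to every agent at $\mu = \mu^\star$ yields $x_i^\star(\mu^\star) = 0$ for all $i$, hence $S(\mu^\star) = \sum_i x_i^\star(\mu^\star) = 0$.

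The contradiction then comes from market clearing \eqref{eq:market-clearing}: the equilibrium condition requires $S(\mu^\star) = m$, and under the setup of Proposition~\ref{prop:existence} we have $m > 0$ (indeed $S(0) > m$, so $m$ is strictly positive). Thus $0 = S(\mu^\star) = m > 0$ is impossible, which contradicts the hypothesis $\mu^\star \ge v_{\max} - \tau$ and completes the proof.

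The argument is short because Lemma~\ref{lem:mono_1} and Lemma~\ref{lem:bounded} have already done the heavy lifting on the structure of best responses. The only subtle point, and the one I would be careful about in the write-up, is to invoke the correct boundary case of the best response: the strict inequality $\mu^\star < v_{\max} - \tau$ (rather than $\le$) follows from the fact that if $\tau + \mu^\star$ strictly exceeds $V_i'(0)$ for the maximizing index, all agents corner at zero and we immediately violate market clearing; the inequality cannot be tight because even the equality case $V_i'(0) = \tau + \mu^\star$ still yields $x_i^\star(\mu^\star) = 0$ by the boundary rule stated after \eqref{eq:foc}, so the contradiction is produced identically.
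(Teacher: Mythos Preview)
Your proof is correct and follows essentially the same approach as the paper's: argue by contradiction that $\mu^\star \ge v_{\max}-\tau$ forces $\tau+\mu^\star \ge V_i'(0)$ for every $i$, invoke the boundary case after \eqref{eq:foc} to conclude $S(\mu^\star)=0$, and contradict the market-clearing requirement $S(\mu^\star)=m>0$; the lower bound is dispatched via dual feasibility in both versions. Your write-up is simply more explicit about why the equality case $\tau+\mu^\star = V_i'(0)$ also yields the corner solution, which is a helpful clarification but not a different argument.
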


\begin{proof}[Proof of Proposition~\ref{prop:dual}]
Suppose $\mu \ge v_{\max}-\tau$. Then
\[
    \tau+\mu \;\ge\; V_i'(0), \qquad \forall i\in N,
\]
which implies $x_i^\star(\mu)=0$ and $S(\mu)=0$. But equilibrium requires
$S(\mu^\star)=m>0$, a contradiction. Hence $\mu^\star < v_{\max}-\tau$.
Nonnegativity $\mu^\star \ge 0$ follows from dual feasibility.
\end{proof}

\begin{proposition}[Comparative Statics in Capacity]\label{prop:compstat}
Suppose $S$ is differentiable at $\mu^\star(m)$ with $S'(\mu^\star)<0$. Then
\begin{equation}\label{eq:dmu-dm}
    \frac{d\mu^\star}{dm} = \frac{1}{S'(\mu^\star)} < 0.
\end{equation}

\end{proposition}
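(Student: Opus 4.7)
The plan is to apply the Implicit Function Theorem to the market-clearing condition and read off the sign from the hypothesis on $S'$. Recall from Proposition~\ref{prop:existence} that the equilibrium price $\mu^\star(m)$ is characterized implicitly by
\begin{equation*}
    S(\mu^\star(m)) \;=\; m,
\end{equation*}
where $S$ is continuous and strictly decreasing by Lemma~\ref{lem:mono_1}. Under the hypothesis that $S$ is differentiable at $\mu^\star(m)$ with $S'(\mu^\star)<0$, the map $F(\mu,m):=S(\mu)-m$ satisfies $\partial_\mu F(\mu^\star,m) = S'(\mu^\star) \neq 0$, which is precisely the regularity condition needed to invoke the Implicit Function Theorem.

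By the Implicit Function Theorem, $\mu^\star$ is a $C^1$ function of $m$ in a neighborhood of the reference point, and differentiating the clearing identity with respect to $m$ yields
\begin{equation*}
    S'(\mu^\star(m)) \cdot \frac{d\mu^\star}{dm} \;=\; 1,
\end{equation*}
so that $\dfrac{d\mu^\star}{dm} = \dfrac{1}{S'(\mu^\star)}$. Since $S'(\mu^\star)<0$ by assumption, the ratio is negative, which gives the claimed inequality and completes the argument.

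There is essentially no hard step here: the whole result is a one-line implicit differentiation once the hypothesis of the proposition and the equilibrium identity from Proposition~\ref{prop:existence} are in place. The only point that deserves a word of care is the \emph{local} nature of the conclusion: the Implicit Function Theorem only guarantees differentiability of $m \mapsto \mu^\star(m)$ at points where $S'(\mu^\star)<0$, which matches exactly the scope of the proposition's hypothesis, so no further regularity argument on $V_i$ or $C_i$ is required beyond what Assumption~\ref{ass:val} already supplies through Lemma~\ref{lem:mono_1}.
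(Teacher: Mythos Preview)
Your proof is correct and follows essentially the same approach as the paper: both differentiate the clearing identity $S(\mu^\star(m))=m$ with respect to $m$ to obtain $S'(\mu^\star)\,\tfrac{d\mu^\star}{dm}=1$ and then use the sign hypothesis. Your version is slightly more careful in explicitly invoking the Implicit Function Theorem to justify differentiability of $m\mapsto\mu^\star(m)$, whereas the paper simply differentiates the identity directly.
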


\begin{proof}[Proof of Proposition~\ref{prop:compstat}]
The clearing condition is
\[
    S(\mu^\star(m)) = m.
\]
Differentiating w.r.t.~$m$ gives
\[
    S'(\mu^\star)\frac{d\mu^\star}{dm} = 1.
\]
Since $S'(\mu^\star)<0$, it follows that $\tfrac{d\mu^\star}{dm}<0$,
i.e., increasing capacity reduces the equilibrium price.
\end{proof}

\medskip
Economically, $V_i'(x_i)$ is the marginal benefit, $C_i'(x_i)$ the marginal
private cost, and $\tau+\mu$ the effective contract price. The auxiliary results
guarantee that best responses are well-behaved, clearing prices are bounded,
and comparative statics follow economic intuition.

\subsection{Equilibrium Formulation and Characterization}

We now lift the analysis to the system level by defining the
\emph{contract-clearing equilibrium}. This subsection establishes existence and
uniqueness: the guarantees that allocations are well-defined and reproducible.

For a given $\mu \ge 0$, the best-response mapping of agent $i$ is
\begin{equation}\label{eq:br-mapping}
    BR_i(\mu) = \arg\max_{x_i\ge0} U_i(x_i;\mu).
\end{equation}
Aggregate demand is equivalently defined as in \eqref{eq:agg-demand}:
\begin{equation}\label{eq:agg-demand2}
    S(\mu) = \sum_{i=1}^n BR_i(\mu).
\end{equation}

\begin{definition}[Contract-Clearing Equilibrium]\label{def:contract-eq2}
An allocation $(\mathbf{x}^\star,\mu^\star)$ is a contract equilibrium if
\begin{align}
    x_i^\star &= BR_i(\mu^\star), \quad \forall i \in N, \label{eq:eq1}\\
    S(\mu^\star) &= m. \label{eq:eq2}
\end{align}
\end{definition}

\begin{theorem}[Existence of Equilibrium]\label{thm:existence2}
Under Assumption~\ref{ass:val}, a contract-clearing equilibrium
$(\mathbf{x}^\star,\mu^\star)$ exists.
\end{theorem}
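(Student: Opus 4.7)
The plan is to reduce the two equilibrium conditions of Definition~\ref{def:contract-eq2} to a one-dimensional fixed-point problem in the shadow price $\mu$ and then invoke the Intermediate Value Theorem on the aggregate-demand map $S(\mu)$. This is structurally identical to the proof of Proposition~\ref{prop:existence}, but I would now underpin each step with the sharper regularity results of Lemmas~\ref{lem:bounded}--\ref{lem:mono_1} together with Proposition~\ref{prop:dual}.

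First, Lemma~\ref{lem:mono_1} supplies continuity and strict monotonicity of $S$ on $[0,\infty)$, and Lemma~\ref{lem:bounded} ensures each component $x_i^\star(\mu)$ is finite, so $S$ is well-defined throughout $[0,\infty)$. Next I would establish the two endpoint inequalities needed for the IVT. At the upper endpoint, the same reasoning that drives the contradiction in Proposition~\ref{prop:dual} shows that whenever $\tau+\mu \ge v_{\max}$, every agent's marginal valuation at zero is weakly dominated by the effective contract price, forcing $x_i^\star(\mu)=0$ for all $i$ and hence $S(v_{\max}-\tau)=0<m$. At the lower endpoint I would invoke the standing scarcity premise $S(0)>m$, already implicit in the clearing equation \eqref{eq:eq2}; if instead $S(0)\le m$, the capacity constraint is slack and $(\mu^\star,\mathbf{x}^\star)=(0,BR(0))$ already constitutes an equilibrium under complementary slackness.

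With $S(0)>m>0=S(v_{\max}-\tau)$ and $S$ continuous and strictly decreasing on the compact interval $[0,v_{\max}-\tau]$, the IVT yields a unique $\mu^\star\in(0,v_{\max}-\tau)$ satisfying $S(\mu^\star)=m$. Setting $x_i^\star=BR_i(\mu^\star)$, strict concavity of $V_i$ combined with convexity of $C_i$ renders $U_i(\cdot;\mu^\star)$ strictly concave on the participation region, so $BR_i(\mu^\star)$ is a singleton and $x_i^\star$ is unambiguously defined. The pair $(\mathbf{x}^\star,\mu^\star)$ then satisfies \eqref{eq:eq1} by construction and \eqref{eq:eq2} by the choice of $\mu^\star$.

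The main obstacle I foresee is the lump-sum execution fee $g>0$: because $U_i(\cdot;\mu)$ has a jump of $-g$ at $x_i=0$, the best response can drop discontinuously to zero once the surplus from positive consumption fails to clear $g$, and this in turn creates downward jumps in $S$ that would strictly invalidate the continuity asserted in Lemma~\ref{lem:mono_1}. I would address this either by restricting the existence argument to $g=0$ or by replacing the standard IVT with its non-increasing step-function analogue, in which case market clearing is reinterpreted via complementary slackness: there exists $\mu^\star$ with $S(\mu^{\star+})\le m\le S(\mu^{\star-})$, and a profile in the convex hull of the one-sided best responses satisfies $\mathbf{1}^\top\mathbf{x}^\star=m$.
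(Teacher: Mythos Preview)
Your proposal is correct and follows essentially the same IVT-on-$S(\mu)$ strategy the paper uses; note only that under Assumption~\ref{ass:val} every $V_i'(0)=\infty$, so $v_{\max}=\infty$ and your compact-interval $[0,v_{\max}-\tau]$ collapses to the paper's limiting argument $\lim_{\mu\to\infty}S(\mu)=0$ rather than evaluation at a finite upper endpoint. Your caveat about the $g>0$ discontinuity is well taken and in fact flags an issue the paper's own continuity lemma glosses over.
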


\begin{proof}[Proof of Theorem~\ref{thm:existence2}]
By Lemma~\ref{lem:mono}, $S(\mu)$ is continuous and strictly decreasing.
Moreover, $S(0)>m$ because $V_i'(0)=\infty$ implies strictly positive demand
at zero price, while $\lim_{\mu\to\infty} S(\mu)=0$. Hence by the Intermediate
Value Theorem, there exists $\mu^\star$ such that $S(\mu^\star)=m$.
\end{proof}

\begin{theorem}[Uniqueness of Equilibrium]\label{thm:unique2}
If each $U_i$ is strictly concave in $x_i$, then the contract equilibrium
$(\mathbf{x}^\star,\mu^\star)$ is unique.
\end{theorem}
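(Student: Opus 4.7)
The plan is to decouple the uniqueness claim into two cleanly separated steps: first pin down $\mu^\star$ using the aggregate-demand argument already developed, then pin down each $x_i^\star$ using strict concavity at the individual level.

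First I would establish uniqueness of the clearing price. By Lemma~\ref{lem:mono_1}, the aggregate demand $S(\mu)=\sum_i BR_i(\mu)$ is continuous and strictly decreasing on $[0,\infty)$. Because $S$ is strictly decreasing, the level-set equation $S(\mu)=m$ can admit at most one solution, and Theorem~\ref{thm:existence2} guarantees at least one. Hence $\mu^\star$ is uniquely determined by the market-clearing condition~\eqref{eq:eq2}.

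Second I would fix this unique $\mu^\star$ and argue allocation uniqueness pointwise. For each agent $i$, the best response $x_i^\star=BR_i(\mu^\star)$ is, by definition, a maximizer of $U_i(\,\cdot\,;\mu^\star)$ on the convex set $\mathbb{R}_+$. Under the strict-concavity hypothesis of the theorem, a strictly concave function on a convex set attains its maximum at a single point (otherwise any convex combination of two distinct maximizers would yield a strictly larger value, contradicting optimality). Therefore $BR_i(\mu^\star)$ is single-valued, and $\mathbf{x}^\star=(BR_1(\mu^\star),\dots,BR_n(\mu^\star))$ is the unique best-response profile. Combining the two steps yields uniqueness of the pair $(\mathbf{x}^\star,\mu^\star)$; the classical references \cite{rosen1965,hurwicz2006} cited in Proposition~\ref{prop:unique} can be invoked to frame this as a concave $N$-person game with uniqueness guaranteed by strict diagonal concavity.

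The main obstacle is interpretational rather than technical: the payoff~\eqref{eq:payoff2} contains the indicator term $g\,\mathbf{1}\{x_i>0\}$, which introduces a jump at $x_i=0$ and, strictly speaking, prevents global concavity on $[0,\infty)$. To make the hypothesis operational I would read ``strictly concave in $x_i$'' as strict concavity on the continuation region $\{x_i>0\}$ and then compare the interior optimum's value with $U_i(0)=V_i(0)-C_i(0)$; the agent's participation choice is unique except on a measure-zero set of parameter configurations where the two values coincide. I would either state a nondegeneracy caveat to exclude such knife-edge ties, or restrict attention to $g=0$ for this uniqueness statement and recover the general case generically. Apart from this caveat, the argument is routine.
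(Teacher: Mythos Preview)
Your argument is essentially identical to the paper's: strict concavity makes each $BR_i(\mu)$ single-valued, the resulting aggregate demand $S(\mu)$ is continuous and strictly decreasing so the clearing condition pins down a unique $\mu^\star$, and existence from Theorem~\ref{thm:existence2} closes the argument. Your additional caveat about the indicator term $g\,\mathbf{1}\{x_i>0\}$ breaking global concavity is a genuine subtlety that the paper's proof simply glosses over, so you are in fact being more careful than the original here.
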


\begin{proof}[Proof of Theorem~\ref{thm:unique2}]
Strict concavity of $U_i$ implies each best response $BR_i(\mu)$ is single-valued.
Thus $S(\mu)$ is continuous and strictly decreasing, so the clearing condition
\eqref{eq:eq2} admits at most one solution for $\mu^\star$. Since existence is
established by Theorem~\ref{thm:existence2}, the equilibrium is unique.
\end{proof}

\medskip
From an economic perspective, Theorem~\ref{thm:existence2} ensures that 
scarcity is consistently priced via $\mu^\star$, while 
Theorem~\ref{thm:unique2} guarantees that this price is unique. 
Together these results eliminate multiplicity and indeterminacy 
common in decentralized negotiations.

\subsection{Decentralized Contract-Clearing Algorithm}

Having characterized equilibrium theoretically, we now address the practical
question: how can the equilibrium be reached in a distributed environment
without central coordination? We design a \emph{primal--dual iterative
algorithm}, inspired by modern distributed convex optimization, in which
agents update their allocations in parallel while the contract adjusts the
shadow price $\mu$. This dynamic ensures that the equilibrium
\eqref{def:contract-eq2} emerges endogenously.

The algorithm proceeds in rounds $t=0,1,2,\dots$. At each round,
agents compute approximate best responses given the current price,
while the contract performs a projected dual ascent to enforce the
capacity constraint \eqref{eq:capacity}. Proximal regularization and
Monte Carlo averaging are included to enhance robustness under noise
and heterogeneity.

\begin{algorithm}[htbp]
\caption{Decentralized Contract-Clearing Algorithm}
\label{alg:contract}
\begin{algorithmic}[1]
\Require number of agents $n$, capacity $m$, initial price $\mu^0 \ge 0$, 
step sizes $\{\eta_t\}$ with $0<\eta_t<2/L$, proximal weight $\gamma>0$, 
tolerances $(\varepsilon_p,\varepsilon_d)>0$, Monte Carlo samples $M$.
\Ensure contract-clearing allocation $\mathbf{x}^\star$, equilibrium price $\mu^\star$.
\State Initialize $t \gets 0$, $x_i^0 \gets 0$ for all $i\in N$.
\Repeat
   \ForAll{agents $i \in N$ \textbf{in parallel}}
      \State Compute proximal best response
      \[
      x_i^{t+1} \gets \arg\max_{x_i \ge 0}
      \Big\{ U_i(x_i;\mu^t) - \tfrac{\gamma}{2}\|x_i-x_i^t\|^2 \Big\}.
      \]
      \State Send demand $x_i^{t+1}$ to contract.
   \EndFor
   \State Contract aggregates robust estimate of total demand:
   \[
   \widehat{S}(\mu^t) \gets \frac{1}{M} \sum_{k=1}^M \sum_{i=1}^n x_i^{t+1,(k)}.
   \]
   \State Update dual variable (projected ascent):
   \begin{equation}\label{eq:dual-update}
      \mu^{t+1}=\big[\mu^t+\eta_t(\widehat{S}(\mu^t)-m)\big]_+
   \end{equation}
   \State Compute residuals:
   \[
   r_p^t \gets \big|\widehat{S}(\mu^t)-m\big|, 
   \qquad 
   r_d^t \gets |\mu^{t+1}-\mu^t| .
   \]
   \State $t \gets t+1$.
\Until{$r_p^t \le \varepsilon_p$ \textbf{and} $r_d^t \le \varepsilon_d$}
\State \Return $\mathbf{x}^\star \gets (x_1^t,\dots,x_n^t)$, $\mu^\star \gets \mu^t$.
\end{algorithmic}
\end{algorithm}

\medskip
\noindent\textbf{Remarks.}
\begin{itemize}
    \item The proximal term guarantees uniqueness of the subproblem solution even if $U_i$ is flat near the optimum, ensuring well-defined updates.
    \item Monte Carlo averaging controls variance and makes the algorithm robust to noisy or adversarial demand reporting.
    \item Step-size conditions $\eta_t \in (0,2/L)$ guarantee stability; diminishing step sizes $\eta_t \sim 1/\sqrt{t}$ further ensure 
    $\mathrm{Regret}(T)=o(T)$ as in Definition~\ref{def:regret}.
    \item The dual update \eqref{eq:dual-update} coincides with stochastic approximation methods \cite{robbins1951}, implying almost sure convergence under standard conditions.
\end{itemize}

This algorithm bridges theory and practice: it provides a fully decentralized
procedure that converges to the unique contract-clearing equilibrium
(Theorems~\ref{thm:existence2}--\ref{thm:unique2}), while also achieving
robustness and vanishing regret in repeated play.

\subsection{Convergence Guarantees}

To complement existence (Theorem~\ref{thm:existence2}) and uniqueness
(Theorem~\ref{thm:unique2}), we establish rigorous convergence results for
the decentralized algorithm (Algorithm~\ref{alg:contract}). Define
\begin{equation}\label{eq:F}
    F(\mu) = S(\mu)-m,
\end{equation}
so that equilibrium corresponds to $F(\mu^\star)=0$ with $\mu^\star \ge 0$.

\begin{theorem}[Global Convergence]\label{thm:convergence2}
Suppose Assumption~\ref{ass:val} holds, each $U_i$ is strictly concave and
continuously differentiable, and $S(\mu)$ is $L$-Lipschitz. If the step size
satisfies $\eta \in (0,2/L)$, then the sequence $\{\mu^t\}$ generated by
Algorithm~\ref{alg:contract} converges to the unique solution $\mu^\star$ of
\eqref{eq:F}, and the associated allocations satisfy
\begin{equation}\label{eq:conv}
    \mu^t \to \mu^\star, \qquad \mathbf{x}^t \to \mathbf{x}^\star.
\end{equation}
\end{theorem}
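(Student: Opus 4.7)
The plan is to view the dual update as a projected gradient iteration on the one-dimensional map $F(\mu)=S(\mu)-m$ and to establish convergence through a Lyapunov (Fej\'er-monotonicity) argument rather than a direct contraction. Existence and uniqueness of $\mu^\star$ with $F(\mu^\star)=0$ are already guaranteed by Theorems~\ref{thm:existence2}--\ref{thm:unique2}, so what remains is to show the iterates $\mu^t$ produced by \eqref{eq:dual-update} approach this unique root, and then to lift convergence to the primal variables through the continuous best-response mapping of Lemma~\ref{lem:mono_1}.

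First I would write the iteration in the fixed-point form $\mu^{t+1}=[\mu^t+\eta F(\mu^t)]_+$ and note that $\mu^\star=[\mu^\star+\eta F(\mu^\star)]_+$, so the Euclidean projection onto $\mathbb{R}_+$ is non-expansive and may be dropped when estimating $|\mu^{t+1}-\mu^\star|$. Next I would exploit two structural facts about $F$: strict monotonicity (Lemma~\ref{lem:mono_1}), which gives $(\mu^t-\mu^\star)F(\mu^t)\le 0$ because $F(\mu^\star)=0$, and $L$-Lipschitzness, which yields $F(\mu^t)^2\le -L(\mu^t-\mu^\star)F(\mu^t)$. Expanding the squared distance as
\begin{equation*}
(\mu^{t+1}-\mu^\star)^2\;\le\;(\mu^t-\mu^\star)^2+2\eta(\mu^t-\mu^\star)F(\mu^t)+\eta^2 F(\mu^t)^2,
\end{equation*}
and substituting the Lipschitz bound, the cross term collapses to $\eta(2-\eta L)(\mu^t-\mu^\star)F(\mu^t)\le 0$ for any $\eta\in(0,2/L)$. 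Hence $V^t=(\mu^t-\mu^\star)^2$ is a monotonically non-increasing Lyapunov function, and the telescoping sum $\sum_t (\mu^t-\mu^\star)F(\mu^t)$ is finite, forcing $F(\mu^t)\to 0$ and therefore $\mu^t\to\mu^\star$ by uniqueness of the root.

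For the primal claim, I would invoke continuity of $BR_i(\cdot)$ on $\mathbb{R}_+$ (Lemma~\ref{lem:mono_1}) together with strict concavity of $U_i$, which makes each subproblem's proximal maximizer single-valued and continuous in both $\mu^t$ and the previous iterate $x_i^t$. Passing to the limit then gives $x_i^{t+1}\to BR_i(\mu^\star)=x_i^\star$, and boundedness from Lemma~\ref{lem:bounded} justifies taking limits termwise. I would close by noting that $\mathbf{x}^t\to\mathbf{x}^\star$ also entails feasibility, since $\mathbf{1}^\top\mathbf{x}^t\to S(\mu^\star)=m$.

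The main obstacle is not the deterministic Lyapunov step but reconciling the algorithm's additional features with the clean fixed-point argument: the proximal regularizer $\tfrac{\gamma}{2}\|x_i-x_i^t\|^2$ perturbs each best response, and the Monte Carlo aggregate $\widehat{S}(\mu^t)$ introduces bounded noise into the dual update. I would absorb the proximal term by noting that at the fixed point $x_i^t=x_i^\star$ the regularizer vanishes, and argue via continuity of the proximal map (Moreau--Yosida) that iterates still track the exact best response as $\mu^t$ stabilizes. The stochastic component I would handle by invoking the Robbins--Monro framework referenced after \eqref{eq:dual-update}: with diminishing step sizes $\eta_t\sim 1/\sqrt{t}$ satisfying $\sum\eta_t=\infty$ and $\sum\eta_t^2<\infty$, and bounded variance of $\widehat{S}(\mu^t)-S(\mu^t)$, the Lyapunov inequality holds in expectation and almost sure convergence $\mu^t\to\mu^\star$ follows, with $\mathbf{x}^t\to\mathbf{x}^\star$ then obtained by the same continuity argument.
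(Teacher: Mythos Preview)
Your proof is correct and takes a genuinely different route from the paper. The paper's argument writes the update as $T(\mu)=[\mu+\eta F(\mu)]_+$, asserts that $T$ is a contraction for $\eta\in(0,2/L)$, and invokes the Banach fixed-point theorem; you instead establish Fej\'er monotonicity of $(\mu^t-\mu^\star)^2$ via the inequality $(\mu^{t+1}-\mu^\star)^2\le(\mu^t-\mu^\star)^2+\eta(2-\eta L)(\mu^t-\mu^\star)F(\mu^t)$ and extract convergence from summability of the cross term together with uniqueness of the root. Your route is arguably the more honest one under the theorem's exact hypotheses: Lipschitz continuity plus strict (but not strong) monotonicity of $F$ does not in general make $T$ a uniform contraction, so the paper's Banach step tacitly imports the strong monotonicity that is only assumed later in Corollary~\ref{cor:linear}. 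What the paper's approach buys, when strong monotonicity does hold, is the linear rate for free; what yours buys is a proof that works under the weaker stated hypothesis and dovetails with the paper's separate Proposition~\ref{prop:fejer}. Your discussion of the proximal regularizer and the Monte Carlo aggregate goes beyond what the paper proves here (those are deferred to Theorem~\ref{thm:stochastic}), but the deterministic constant-step-size case---which is the theorem's actual setting---is fully handled by your Lyapunov inequality.
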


\begin{proof}[Proof of Theorem~\ref{thm:convergence2}]
The dual update \eqref{eq:dual-update} can be written as
\[
    \mu^{t+1} = T(\mu^t), \quad T(\mu) := \big[\mu + \eta F(\mu)\big]_+.
\]
Since $S(\mu)$ is continuous and strictly decreasing (Lemma~\ref{lem:mono}),
$F$ is continuous and strictly monotone. Moreover, $S$ being $L$-Lipschitz
implies $|F(\mu_1)-F(\mu_2)| \le L|\mu_1-\mu_2|$. Thus $T$ is a contraction
mapping whenever $\eta \in (0,2/L)$ \cite{rockafellar1970}. By the Banach
fixed-point theorem, $\mu^t \to \mu^\star$ globally. Finally,
$\mathbf{x}^t \to \mathbf{x}^\star$ follows by continuity of best responses
and the definition of equilibrium \eqref{def:contract-eq2}.
\end{proof}

\begin{corollary}[Linear Rate]\label{cor:linear}
If $S(\mu)$ is $\alpha$-strongly monotone, i.e.,
\[
    (S(\mu_1)-S(\mu_2))(\mu_1-\mu_2) \ge \alpha|\mu_1-\mu_2|^2, \quad \alpha>0,
\]
then there exists $\kappa \in (0,1)$ such that
\begin{equation}\label{eq:linear-rate}
    |\mu^t - \mu^\star| \le \kappa^t |\mu^0-\mu^\star|, \qquad \forall t\ge0.
\end{equation}
\end{corollary}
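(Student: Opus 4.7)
The plan is to show that the dual-update operator $T(\mu) := [\mu + \eta F(\mu)]_+$ introduced in the proof of Theorem~\ref{thm:convergence2} is a strict contraction of $\mathbb{R}_+$ with an explicit modulus $\kappa<1$, and then iterate the one-step contraction inequality to obtain the geometric decay \eqref{eq:linear-rate}. First, I would record that $\mu^\star$ is a fixed point of $T$: since $F(\mu^\star) = S(\mu^\star) - m = 0$ and $\mu^\star \ge 0$, we have $T(\mu^\star)=[\mu^\star]_+=\mu^\star$. This reduces the claim to bounding $|T(\mu) - T(\mu^\star)|$ by $\kappa \,|\mu - \mu^\star|$ uniformly in $\mu \ge 0$.

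Next, I would use the fact that projection onto $\mathbb{R}_+$ is $1$-Lipschitz to write
\begin{equation*}
|T(\mu) - T(\mu^\star)|^2 \le \bigl|(\mu-\mu^\star) + \eta(S(\mu)-S(\mu^\star))\bigr|^2,
\end{equation*}
and then expand the right-hand side. The cross term is controlled by the strong-monotonicity hypothesis (read in the sign convention consistent with $S$ being strictly decreasing, so that the product in the statement of the corollary furnishes a bound of the form $(\mu-\mu^\star)(S(\mu)-S(\mu^\star)) \le -\alpha(\mu-\mu^\star)^2$), while the squared term is bounded via the $L$-Lipschitz property of $S$ from Theorem~\ref{thm:convergence2}. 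Collecting the three contributions yields
\begin{equation*}
|T(\mu)-T(\mu^\star)|^2 \;\le\; \bigl(1 - 2\eta\alpha + \eta^2 L^2\bigr)\,(\mu-\mu^\star)^2.
\end{equation*}
Setting $\kappa := \sqrt{1-2\eta\alpha+\eta^2 L^2}$, I would observe that $\kappa \in (0,1)$ for every $\eta \in (0,\,2\alpha/L^2)$, with the sharpest rate $\kappa^2 = 1-\alpha^2/L^2$ at $\eta^\star = \alpha/L^2$. Note that this admissible range sits inside $(0,2/L)$, so it is compatible with the step-size requirement already imposed in Theorem~\ref{thm:convergence2}.

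Finally, a routine induction on the inequality $|\mu^{t+1}-\mu^\star| = |T(\mu^t) - T(\mu^\star)| \le \kappa\,|\mu^t-\mu^\star|$ yields the claimed bound $|\mu^t - \mu^\star| \le \kappa^t |\mu^0-\mu^\star|$. The main obstacle I anticipate is a bookkeeping one rather than a conceptual one: reconciling the sign convention in the statement of $\alpha$-strong monotonicity with the fact that, by Lemma~\ref{lem:mono}, $S$ is strictly \emph{decreasing}, so the relevant inequality must be read in the form that produces a negative cross term. Once the sign is fixed, the projection's non-expansiveness and the Lipschitz/strong-monotonicity pair give the contraction factor uniformly in $\mu$, so no additional local-versus-global argument is needed.
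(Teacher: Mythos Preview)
Your argument is correct and follows the same overall strategy as the paper: both show that the dual update $T(\mu)=[\mu+\eta F(\mu)]_+$ is a strict contraction with fixed point $\mu^\star$, then iterate. The paper's proof is a two-line sketch that simply asserts the contraction factor $\kappa=\max\{|1-\eta\alpha|,\,|1-\eta L|\}$ for $\eta\in(0,2/L)$ (a bound that exploits the one-dimensional structure, where strong monotonicity plus Lipschitzness pins the ``slope'' of $-F$ to $[\alpha,L]$), whereas you carry out the standard strongly-monotone/Lipschitz expansion to obtain $\kappa=\sqrt{1-2\eta\alpha+\eta^2L^2}$ on the narrower range $\eta\in(0,2\alpha/L^2)$. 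Both constants are legitimate; yours is the dimension-free version, the paper's is sharper in this scalar setting and covers the full step-size window already assumed in Theorem~\ref{thm:convergence2}. Your explicit handling of the sign convention---noting that the displayed inequality must be read so that the cross term is \emph{negative}, consistent with $S$ being decreasing---is a point the paper glosses over, and is the right thing to flag.
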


\begin{proof}[Proof of Corollary~\ref{cor:linear}]
Under strong monotonicity, $F$ is strongly monotone and Lipschitz. The projected
gradient update \eqref{eq:dual-update} then reduces to a contraction with factor
$\kappa = \max\{|1-\eta\alpha|,|1-\eta L|\}<1$ for $\eta \in (0,2/L)$. Hence
the convergence rate is linear in $t$ \cite{bubeck2015,boyd2011,nedic2018}.
\end{proof}

\begin{proposition}[Fejér Monotonicity]\label{prop:fejer}
Under the assumptions of Theorem~\ref{thm:convergence2}, the sequence
$\{\mu^t\}$ generated by Algorithm~\ref{alg:contract} is Fejér monotone with
respect to the equilibrium point $\mu^\star$, i.e.,
\[
    |\mu^{t+1}-\mu^\star| \;\le\; |\mu^t-\mu^\star|, 
    \qquad \forall t \ge 0.
\]
\end{proposition}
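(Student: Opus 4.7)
The proof proceeds by reinterpreting the dual update as a fixed-point iteration and showing that a single step strictly does not increase the distance to $\mu^\star$. Write $T(\mu)=[\mu+\eta F(\mu)]_+$ so that $\mu^{t+1}=T(\mu^t)$, and observe that $F(\mu^\star)=S(\mu^\star)-m=0$ together with $\mu^\star\ge 0$ gives $T(\mu^\star)=\mu^\star$. Since the projection $[\,\cdot\,]_+$ onto $\mathbb{R}_+$ is non-expansive (and fixes $\mu^\star$), the plan is to replace the projected quantity by its unconstrained pre-image and work with
\[
|\mu^{t+1}-\mu^\star|^2 \;\le\; \bigl|\,\mu^t+\eta F(\mu^t)-\mu^\star\,\bigr|^2,
\]
then expand the right-hand side as $(\mu^t-\mu^\star)^2+2\eta F(\mu^t)(\mu^t-\mu^\star)+\eta^2 F(\mu^t)^2$ and control the cross term.

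\textbf{Key inequality.} Set $d=\mu^t-\mu^\star$ and $e=F(\mu^t)=F(\mu^t)-F(\mu^\star)$. Lemma~\ref{lem:mono_1} ensures that $S$, hence $F$, is strictly decreasing, which yields the monotonicity inequality $de\le 0$; equivalently, $F(\mu^t)(\mu^t-\mu^\star)=-|d||e|$. The $L$-Lipschitz hypothesis on $S$ (inherited by $F$) gives $|e|\le L|d|$, and the step-size restriction $\eta\in(0,2/L)$ yields $\eta|e|\le\eta L|d|\le 2|d|$. Combining these,
\[
2\eta F(\mu^t)(\mu^t-\mu^\star)+\eta^2 F(\mu^t)^2 \;=\; \eta|e|\bigl(\eta|e|-2|d|\bigr)\;\le\;0.
\]
Substituting back into the expanded distance gives $|\mu^{t+1}-\mu^\star|^2\le |\mu^t-\mu^\star|^2$, which is precisely Fejér monotonicity.

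\textbf{Main obstacle.} The real content of the argument is not computational but lies in aligning the three ingredients—projection non-expansiveness, monotonicity of $F$, and the Lipschitz/step-size balance—so that the cross term is absorbed by the quadratic term. In particular, the bound $\eta \le 2/L$ is sharp for this argument: relaxing it would break the sign of $\eta|e|-2|d|$, and monotonicity alone (without strong monotonicity, which Corollary~\ref{cor:linear} invokes only to upgrade to a linear rate) cannot recover it. A secondary subtlety is the projection step, since $T$ is defined with the truncation $[\,\cdot\,]_+$. Here I would cite non-expansiveness of metric projections onto closed convex sets and the fact that $\mu^\star\in\mathbb{R}_+$ is fixed by the projection, which reduces the projected inequality to the unconstrained one without any case analysis on whether $\mu^t+\eta F(\mu^t)$ is positive or negative.

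\textbf{Closing remark.} The argument also shows, as a by-product, that equality $|\mu^{t+1}-\mu^\star|=|\mu^t-\mu^\star|$ can occur only if $|e|=0$, i.e., $\mu^t=\mu^\star$, so the iteration is in fact quasi-Fejér with strict decrease off the equilibrium. This is consistent with—and slightly strengthens—the convergence statement of Theorem~\ref{thm:convergence2}, and provides the natural stability certificate that the $\mu$-sequence never drifts farther from the clearing price than it started.
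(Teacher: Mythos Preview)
Your proof is correct and follows essentially the same route as the paper: write $\mu^{t+1}=T(\mu^t)$ with $T(\mu)=[\mu+\eta F(\mu)]_+$, note that $\mu^\star$ is a fixed point, and use monotonicity of $F$, the $L$-Lipschitz bound, the step-size restriction $\eta\in(0,2/L)$, and non-expansiveness of the projection to obtain $|T(\mu^t)-\mu^\star|\le|\mu^t-\mu^\star|$. The only difference is granularity: the paper asserts in one line that ``$T$ is nonexpansive'' and cites \cite{bauschke2011}, whereas you unpack the square $|\mu^t+\eta F(\mu^t)-\mu^\star|^2$ and explicitly verify that the cross term $2\eta de$ dominates the quadratic term $\eta^2 e^2$ via $\eta|e|\le\eta L|d|<2|d|$, which is exactly the computation underlying the cited nonexpansiveness result.
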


\begin{proof}[Proof of Proposition~\ref{prop:fejer}]
From the dual update \eqref{eq:dual-update}, the iteration can be expressed as
$\mu^{t+1}=T(\mu^t)$ with $T(\mu)=[\mu+\eta F(\mu)]_+$. For
$\eta\in(0,2/L)$, $T$ is nonexpansive due to the Lipschitz continuity and
monotonicity of $F$. Since $\mu^\star$ is a fixed point of $T$, we have
$\|T(\mu^t)-\mu^\star\|\le\|\mu^t-\mu^\star\|$ for all $t$, which is exactly
the Fejér monotonicity property \cite{bauschke2011}.
\end{proof}

\begin{proposition}[Ergodic Residual Convergence]\label{prop:ergodic}
Let $\{\mu^t\}$ be generated by Algorithm~\ref{alg:contract} with
$\eta \in (0,2/L)$. Then the averaged residuals converge at rate
\[
    \frac{1}{T}\sum_{t=1}^T |F(\mu^t)| \;=\; O\!\left(\tfrac{1}{T}\right).
\]
\end{proposition}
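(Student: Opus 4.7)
The plan is to bypass the $\ell^2$ telescoping route and instead deduce the $\ell^1$ ergodic rate directly from the linear convergence of the iterates. The key observation is that once $\{\mu^t\}$ decays geometrically toward $\mu^\star$, the residuals $|F(\mu^t)|$ also decay geometrically (because $F$ is Lipschitz and vanishes at $\mu^\star$), so $\sum_{t=1}^T |F(\mu^t)|$ is dominated by an absolutely convergent geometric series whose partial sums are $O(1)$; dividing by $T$ then yields the advertised $O(1/T)$ rate.

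Concretely, I would proceed in three steps. First, I invoke the contraction structure underlying Theorem~\ref{thm:convergence2} and made quantitative in Corollary~\ref{cor:linear}: for $\eta \in (0, 2/L)$ the operator $T(\mu) = [\mu + \eta F(\mu)]_+$ is a strict contraction with factor $\kappa \in (0, 1)$, so that
\[
|\mu^t - \mu^\star| \;\le\; \kappa^t\, |\mu^0 - \mu^\star|, \qquad \forall\, t \ge 0.
\]
Second, since $F = S - m$ with $S$ being $L$-Lipschitz and $F(\mu^\star) = 0$, I bound each residual by
\[
|F(\mu^t)| \;=\; |F(\mu^t) - F(\mu^\star)| \;\le\; L\,|\mu^t - \mu^\star| \;\le\; L\,\kappa^t\, |\mu^0 - \mu^\star|.
\]
Third, summing the resulting geometric series and dividing by $T$ gives
\[
\frac{1}{T}\sum_{t=1}^T |F(\mu^t)| \;\le\; \frac{L\,|\mu^0 - \mu^\star|}{T} \sum_{t=1}^T \kappa^t \;\le\; \frac{L\,\kappa\,|\mu^0 - \mu^\star|}{(1-\kappa)\,T} \;=\; O\!\left(\tfrac{1}{T}\right),
\]
as claimed.

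The hard part is justifying that $T$ is a \emph{strict} contraction rather than merely nonexpansive. Under strict concavity of the $U_i$'s alone one obtains only the Fejér monotonicity of Proposition~\ref{prop:fejer}, and telescoping then produces an $\ell^2$ ergodic bound that, after Jensen's inequality, degrades to $O(1/\sqrt{T})$ for the $\ell^1$ average. To recover the $O(1/T)$ rate I therefore rely on the $\alpha$-strong monotonicity of $S$ invoked in Corollary~\ref{cor:linear}, which together with the $L$-Lipschitz property yields $\kappa = \max\{|1-\eta \alpha|,\,|1-\eta L|\} < 1$ for $\eta \in (0, 2/L)$. This hypothesis is natural here: strict concavity of each $U_i$ and the boundedness of best responses (Lemma~\ref{lem:bounded}) imply a uniform positive lower bound on $-d x_i^\star/d\mu$ via \eqref{eq:pf-mono1}, which aggregates to $\alpha$-strong monotonicity of $S$ on the compact interval $[0,\,v_{\max}-\tau]$ identified in Proposition~\ref{prop:dual}. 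With this ingredient in place, the geometric decay propagates through the Lipschitz bound and collapses the partial sums into a bounded series, producing the stated $O(1/T)$ ergodic residual rate.
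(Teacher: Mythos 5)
Your argument is genuinely different from the paper's, and in an instructive way. The paper's proof is a one-line appeal to ``standard ergodic convergence results'' for nonexpansive projected-gradient iterations, resting only on Lipschitz continuity and monotonicity of $F$; you instead derive the bound from the linear convergence of Corollary~\ref{cor:linear}, so that $|F(\mu^t)| = |F(\mu^t)-F(\mu^\star)| \le L\kappa^t|\mu^0-\mu^\star|$ and the partial sums $\sum_{t\le T}|F(\mu^t)|$ are uniformly bounded, giving $O(1/T)$ after division by $T$. Your diagnosis of why this detour is needed is correct, and it is in fact a criticism that the paper's own proof does not survive: under nonexpansiveness alone, the telescoping/Fej\'er argument controls the \emph{squared} residuals, $\tfrac{1}{T}\sum_t|F(\mu^t)|^2 = O(1/T)$, which by Jensen yields only $\tfrac{1}{T}\sum_t|F(\mu^t)| = O(1/\sqrt{T})$; the stated $\ell^1$ rate genuinely requires something like the geometric decay you import. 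What your route buys is an elementary, fully quantitative proof with an explicit constant $L\kappa|\mu^0-\mu^\star|/\big((1-\kappa)T\big)$; what it costs is the extra hypothesis of $\alpha$-strong monotonicity of $S$, which appears in Corollary~\ref{cor:linear} but not in the statement of Proposition~\ref{prop:ergodic}.

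The one step you should not present as established is the closing claim that strong monotonicity ``follows'' from strict concavity together with Lemma~\ref{lem:bounded}. By \eqref{eq:pf-mono1}, $-dx_i^\star/d\mu = 1/\big(C_i''(x_i^\star)-V_i''(x_i^\star)\big)$ is bounded away from zero only if $C_i''-V_i''$ is bounded \emph{above} on the relevant range, i.e., only under a uniform upper bound on the curvatures. Assumption~\ref{ass:val} guarantees strict concavity ($V_i''<0$) but no such two-sided bound, and the Inada condition $V_i'(0)=\infty$ can force $V_i''$ to blow up near the origin for agents whose allocations approach zero, in which case their contribution to $-S'$ vanishes. So either state strong monotonicity (equivalently, a uniform curvature bound on the compact price range of Proposition~\ref{prop:dual}) as an explicit hypothesis, matching Corollary~\ref{cor:linear}, or weaken the conclusion to the $O(1/\sqrt{T})$ rate that nonexpansiveness actually delivers. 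With that hypothesis made explicit, your three-step argument is complete and correct.
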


\begin{proof}[Proof of Proposition~\ref{prop:ergodic}]
Since $T(\mu)$ is nonexpansive and $F$ is Lipschitz, standard ergodic
convergence results for projected gradient methods apply
\cite{rockafellar1970,bubeck2015}. This yields an $O(1/T)$ decay rate
of the averaged residuals.
\end{proof}

\begin{theorem}[Stochastic Robustness]\label{thm:stochastic}
Suppose $\widehat{S}(\mu^t)=S(\mu^t)+\xi^t$ where $\{\xi^t\}$ is zero-mean
noise with bounded variance. If $\{\eta_t\}$ satisfies Robbins--Monro
conditions ($\sum_t \eta_t=\infty$, $\sum_t \eta_t^2<\infty$), then
\[
    \mathbb{E}[|\mu^t-\mu^\star|^2] \;\to\; 0.
\]
\end{theorem}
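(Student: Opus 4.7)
The plan is to follow the classical stochastic approximation analysis of Robbins--Monro, adapted to the projected dual update \eqref{eq:dual-update}. Under the noisy reading $\widehat{S}(\mu^t) = S(\mu^t) + \xi^t$ the iteration becomes $\mu^{t+1} = [\mu^t + \eta_t(F(\mu^t) + \xi^t)]_+$, and since $\mu^\star \ge 0$ is a fixed point of this map in the noise-free case, I would exploit nonexpansiveness of the projection onto $\mathbb{R}_+$ together with the strict monotonicity of $F$ established in Lemma~\ref{lem:mono}.

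Letting $e_t = \mu^t - \mu^\star$ and $\mathcal{F}_t$ denote the $\sigma$-algebra generated by $(\mu^0, \xi^0, \dots, \xi^{t-1})$, I would first expand the squared error using $1$-Lipschitzness of $[\cdot]_+$:
\[
e_{t+1}^2 \;\le\; e_t^2 + 2\eta_t e_t\bigl(F(\mu^t) + \xi^t\bigr) + \eta_t^2\bigl(F(\mu^t) + \xi^t\bigr)^2.
\]
Taking conditional expectation eliminates the cross term $2\eta_t e_t \xi^t$ because $\xi^t$ is zero-mean and independent of $\mathcal{F}_t$; the variance bound gives $\mathbb{E}[(\xi^t)^2 \mid \mathcal{F}_t] \le \sigma^2$; and Lipschitz continuity of $F$ with $F(\mu^\star) = 0$ yields $F(\mu^t)^2 \le L^2 e_t^2$. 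Collecting these,
\[
\mathbb{E}\bigl[e_{t+1}^2 \,\big|\, \mathcal{F}_t\bigr] \;\le\; \bigl(1 + \eta_t^2 L^2\bigr) e_t^2 + 2\eta_t e_t F(\mu^t) + \eta_t^2 \sigma^2,
\]
and strict monotonicity renders the middle term non-positive.

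Next I would invoke the Robbins--Siegmund almost-supermartingale lemma with $\alpha_t = \eta_t^2 L^2$, $\beta_t = -2\eta_t e_t F(\mu^t) \ge 0$, and $\gamma_t = \eta_t^2 \sigma^2$. Since $\sum_t \eta_t^2 < \infty$, this delivers almost-sure convergence of $e_t^2$ to some finite random limit together with $\sum_t \eta_t [-e_t F(\mu^t)] < \infty$ almost surely. Combining the latter with $\sum_t \eta_t = \infty$ and strict monotonicity (which forces $-e_t F(\mu^t)$ to be bounded below by a strictly positive function of $|e_t|$ on any set bounded away from zero) pins the limit at $0$, so $e_t \to 0$ almost surely.

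The main obstacle will be the final upgrade from almost-sure convergence to convergence of $\mathbb{E}[e_t^2]$, which is what the theorem actually claims. My plan is to take unconditional expectations in the Lyapunov recursion and first obtain $\sup_t \mathbb{E}[e_t^2] < \infty$ via a Gr\"onwall-type iteration on the factor $(1+\eta_t^2 L^2)$, whose infinite product converges because $\sum_t \eta_t^2 < \infty$. Uniform integrability of $\{e_t^2\}$ then follows by combining this uniform bound with the dual boundedness of Proposition~\ref{prop:dual} (which confines the deterministic part of $\mu^t$ to a compact interval so that the only source of tail mass is the noise), and Vitali's theorem upgrades the almost-sure limit to $\mathbb{E}[e_t^2] \to 0$. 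A mild strengthening such as a bounded $(2+\delta)$-moment on $\xi^t$ would shorten this last step, but the stated bounded-variance hypothesis together with Fej\'er-type arguments in the spirit of Proposition~\ref{prop:fejer} should already suffice.
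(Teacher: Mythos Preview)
Your approach is aligned with the paper's, though you go far beyond it in detail: the paper's proof consists of a single sentence asserting that the noisy dual update is a Robbins--Monro stochastic approximation and that monotonicity plus Lipschitz continuity of $F$ suffice for almost-sure and mean-square convergence to $\mu^\star$, citing \cite{robbins1951}. You are effectively reconstructing the Robbins--Monro convergence argument from first principles---the Lyapunov expansion via nonexpansiveness of the projection, the Robbins--Siegmund supermartingale step, and the upgrade to $L^2$---which is considerably more informative than the paper's bare citation and is the substance behind that citation.

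One correction is warranted in your final step: your invocation of Proposition~\ref{prop:dual} for uniform integrability is misplaced. That proposition bounds only the equilibrium price $\mu^\star$, not the stochastic iterates $\mu^t$; the iterates are bounded below by the projection but carry no a~priori deterministic upper bound, so there is no ``deterministic part of $\mu^t$'' confined to a compact interval by that result. Your alternative suggestion of a bounded $(2+\delta)$-moment on $\xi^t$ is the clean route to uniform integrability of $\{e_t^2\}$; alternatively, one can follow the original Robbins--Monro analysis, which obtains $L^2$ convergence directly under strict monotonicity, linear growth, and bounded variance via a more delicate recursion on $\mathbb{E}[e_t^2]$ without passing through almost-sure convergence plus Vitali. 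Either way, the structural outline of your argument is sound and substantially more explicit than what the paper provides.
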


\begin{proof}[Proof of Theorem~\ref{thm:stochastic}]
The noisy dual update is a Robbins--Monro stochastic approximation
\cite{robbins1951}. Since $F$ is monotone and Lipschitz, the update
converges almost surely and in mean-square to the unique root $\mu^\star$.
\end{proof}

\begin{corollary}[Dynamic Regret Bound]\label{cor:regret}
If $\eta_t \sim 1/\sqrt{t}$, the allocations generated by
Algorithm~\ref{alg:contract} satisfy
\[
    \mathrm{Regret}(T) = O(\sqrt{T}),
\]
as defined in Definition~\ref{def:regret}.
\end{corollary}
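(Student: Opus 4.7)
The plan is to reduce the dynamic regret of the primal allocations to a regret statement about the dual iterates $\{\mu^t\}$ produced by the projected ascent \eqref{eq:dual-update}, and then to invoke the standard online (stochastic) gradient descent bound with diminishing step size $\eta_t \sim 1/\sqrt{t}$.

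First I would translate the surplus gap into a dual gap. Because each best response $BR_i(\mu)$ is continuously differentiable with derivative \eqref{eq:pf-mono1} bounded in modulus on the compact interval $[0,v_{\max}-\tau]$ guaranteed by Proposition~\ref{prop:dual}, the map $\mu\mapsto BR(\mu)$ is Lipschitz. Combined with the Lipschitz continuity of $V_i-C_i$ on the bounded region from Lemma~\ref{lem:bounded}, this produces a per-round bound
\[
    U(\mathbf{x}^\star_t)-U(\mathbf{x}_t) \;\le\; C_U\,|\mu^t-\mu^\star|,
\]
for a constant $C_U$ that is independent of $t$, reducing the claim to controlling the cumulative dual error $\sum_{t=1}^T |\mu^t-\mu^\star|$.

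Next I would deploy the stochastic projected gradient analysis already behind Theorem~\ref{thm:stochastic}. Expanding $|\mu^{t+1}-\mu^\star|^2$, using nonexpansiveness of the projection onto $\mathbb{R}_+$, the monotonicity of $F$ in \eqref{eq:F} via Lemma~\ref{lem:mono_1}, and the decomposition $\widehat{S}(\mu^t)=S(\mu^t)+\xi^t$ with $\mathbb{E}[\xi^t]=0$ and bounded variance, I would derive the descent inequality
\[
    \mathbb{E}|\mu^{t+1}-\mu^\star|^2
    \;\le\; \mathbb{E}|\mu^t-\mu^\star|^2
     - 2\eta_t\,\mathbb{E}\!\left[F(\mu^t)(\mu^t-\mu^\star)\right]
     + \eta_t^2 B,
\]
with $B$ absorbing $\sup_\mu |F(\mu)|^2$ and the noise variance. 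Telescoping, applying Cauchy--Schwarz through $\sum_t |\mu^t-\mu^\star|\le \sqrt{T\sum_t |\mu^t-\mu^\star|^2}$, and substituting $\eta_t=c/\sqrt{t}$ so that $\sum_t \eta_t=O(\sqrt{T})$ and $\sum_t \eta_t^2=O(\log T)$ would yield $\sum_{t=1}^T \mathbb{E}|\mu^t-\mu^\star| = O(\sqrt{T})$, which, combined with the first step, delivers $\mathrm{Regret}(T)=O(\sqrt{T})$.

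The hard part will be reconciling this essentially static-dual argument with the dynamic comparator $\{\mathbf{x}^\star_t\}$ implicit in Definition~\ref{def:regret}. Under a stationary environment the comparator collapses to the unique $\mu^\star$ of Theorem~\ref{thm:existence2} and the plan above is immediate; under drift or shocks (as in Definition~\ref{def:res}) an extra term of order $P_T/\eta_T$, where $P_T=\sum_t |\mu^\star_t-\mu^\star_{t-1}|$ is the comparator path length, enters the telescoping sum. Retaining the overall $O(\sqrt{T})$ rate requires assuming $P_T=O(\sqrt{T})$, which is natural for the finitely-many-shock regime covered by the resilience analysis. The remaining ingredients---Lipschitz transfer from dual to primal, boundedness of $F$ on $[0,v_{\max}-\tau]$, and Robbins--Monro handling of $\xi^t$---are standard once this comparator subtlety is resolved.
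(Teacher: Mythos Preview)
Your route differs substantially from the paper's: the paper simply observes that the dual update \eqref{eq:dual-update} is a projected subgradient iteration with step size $\eta_t\sim 1/\sqrt t$ and invokes the classical online convex optimization bound \cite{hazan2016,shalev2012} directly, without passing through $|\mu^t-\mu^\star|$ at all. Your plan, by contrast, reduces primal regret to a cumulative dual deviation $\sum_t|\mu^t-\mu^\star|$ via a first-order Lipschitz transfer, and then tries to control that sum.

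The gap is in the second step. With $\eta_t=c/\sqrt t$, telescoping the descent inequality (and using strong monotonicity of $F$) yields at best
\[
\sum_{t=1}^T \frac{1}{\sqrt t}\,\mathbb{E}|\mu^t-\mu^\star|^2 = O(\log T),
\qquad\text{hence}\qquad
\sum_{t=1}^T \mathbb{E}|\mu^t-\mu^\star|^2 = O\bigl(\sqrt T\,\log T\bigr).
\]
Feeding this into your Cauchy--Schwarz step gives $\sum_t\mathbb{E}|\mu^t-\mu^\star|\le \sqrt{T\cdot O(\sqrt T\log T)}=O(T^{3/4}\sqrt{\log T})$, not $O(\sqrt T)$. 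Equivalently, the standard stochastic-approximation rate for $\eta_t\sim 1/\sqrt t$ is $\mathbb{E}|\mu^t-\mu^\star|^2=O(1/\sqrt t)$, so $\mathbb{E}|\mu^t-\mu^\star|=O(t^{-1/4})$ and the sum is $O(T^{3/4})$. Your linear transfer $U(\mathbf{x}^\star_t)-U(\mathbf{x}_t)\le C_U|\mu^t-\mu^\star|$ is therefore too coarse to reach $O(\sqrt T)$.

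The easy repair is to sharpen the transfer to second order: since $\mathbf{x}^\star$ maximizes the strictly concave $U$, a Taylor argument gives $U(\mathbf{x}^\star)-U(\mathbf{x}_t)\le C'|\mu^t-\mu^\star|^2$, and then the $O(\sqrt T\log T)$ bound on $\sum_t|\mu^t-\mu^\star|^2$ suffices (up to a log factor). Alternatively, abandon the distance-to-optimum indirection and apply the OCO regret inequality directly to the dual losses, as the paper does; that route gives the clean $O(\sqrt T)$ without the log.
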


\begin{proof}[Proof of Corollary~\ref{cor:regret}]
The update rule is a projected subgradient method with diminishing stepsize.
Classical online convex optimization results \cite{hazan2016,shalev2012}
yield $\mathrm{Regret}(T)=O(\sqrt{T})$.
\end{proof}

\medskip
Together, Theorem~\ref{thm:convergence2}, Corollary~\ref{cor:linear},
Proposition~\ref{prop:fejer}, Proposition~\ref{prop:ergodic},
Theorem~\ref{thm:stochastic}, and Corollary~\ref{cor:regret}
establish that Algorithm~\ref{alg:contract} is globally convergent,
monotonically stable, robust to stochastic perturbations, and efficient
in the online learning sense.

For clarity, Table~\ref{tab:assumption-dependency} reports the logical 
dependencies between Assumption~\ref{ass:val}, the core definitions 
(Equilibrium, Efficiency, Fairness, Regret, Resilience), and the main 
theoretical results. The table highlights which assumptions are directly 
required (\req) and which definitions are used in an auxiliary manner (\aux).

\begin{table}[htp]
\centering
\scriptsize
\begin{adjustbox}{angle=90}
\begin{minipage}{\textheight}
\caption{Dependency of Assumption~\ref{ass:val} and key definitions across main theoretical results. \\
Symbols: \req = directly required, \aux = auxiliary or definitional. 
\\Notes column provides interpretation of each dependency.}
\label{tab:assumption-dependency}
\renewcommand{\arraystretch}{1.2}
\setlength{\tabcolsep}{3pt}
\begin{tabular}{lcccccccccccccc p{4cm}}
\toprule
\textbf{Assumption/Definition} 
& \rotatebox{80}{Lemma~\ref{lem:bounded}} 
& \rotatebox{80}{Lemma~\ref{lem:mono}} 
& \rotatebox{80}{Prop.~\ref{prop:dual}} 
& \rotatebox{80}{Prop.~\ref{prop:compstat}} 
& \rotatebox{80}{Thm.~\ref{thm:existence2}} 
& \rotatebox{80}{Thm.~\ref{thm:unique2}} 
& \rotatebox{80}{Thm.~\ref{thm:convergence2}} 
& \rotatebox{80}{Cor.~\ref{cor:linear}} 
& \rotatebox{80}{Prop.~\ref{prop:fejer}} 
& \rotatebox{80}{Prop.~\ref{prop:ergodic}} 
& \rotatebox{80}{Thm.~\ref{thm:stochastic}} 
& \rotatebox{80}{Cor.~\ref{cor:regret}} 
& \textbf{Notes} \\
\midrule
\multicolumn{14}{l}{\textbf{Assumption}} \\
A3.1: Valuation and Cost 
  & \req & \req & \req & \req & \req & \req & \req & \req & \req & \req & \aux & \aux 
  & Fundamental structural assumption; auxiliary in stochastic/online results \\
\midrule
\multicolumn{14}{l}{\textbf{Definitions}} \\
D4.5: Contract Equilibrium 
  &      & \aux &      &      & \req & \req & \req & \req & \aux & \aux & \aux & \aux 
  & Underpins all equilibrium theorems \\
D4.8: Efficiency 
  &      &      &      &      &      &      & \aux & \aux & \aux & \aux &      & \aux 
  & Metric used in convergence and regret analysis \\
D4.9: Gini Fairness 
  &      &      &      &      &      &      &      &      &      &      &      & \aux 
  & Fairness measure, links to Price of Fairness \\
D4.10: Price of Fairness 
  &      &      &      &      &      &      &      &      &      &      &      & \aux 
  & Trade-off metric (efficiency vs fairness) \\
D4.11: Resilience 
  &      &      &      &      &      &      &      &      &      & \aux & \aux &      
  & Performance under shocks, tied to robustness results \\
D4.12: Dynamic Regret 
  &      &      &      &      &      &      & \aux & \aux & \aux & \aux & \aux & \req 
  & Basis for regret bound \\
\bottomrule
\end{tabular}
\end{minipage}
\end{adjustbox}
\end{table}

\subsection{Implications}

From a managerial and information-systems perspective, the theoretical results
carry several key implications. First, the contract guarantees efficiency
(Definition~\ref{def:eff}) through surplus maximization, fairness
(Definition~\ref{def:gini}) via transparent allocation rules, and resilience
(Definition~\ref{def:res}) through bounded performance under shocks.
Furthermore, the dynamic regret guarantee (Definition~\ref{def:regret})
ensures that long-run allocations approach the benchmark sequence of
equilibria even under repeated uncertainty.

Second, the equilibrium properties proved above---existence
(Theorem~\ref{thm:existence2}), uniqueness (Theorem~\ref{thm:unique2}),
and convergence (Theorem~\ref{thm:convergence2})---establish that the
allocation mechanism is not only well-defined but also algorithmically
implementable. The projection step guarantees feasibility, while the
step-size bound ensures global stability. These features demonstrate
that efficiency and fairness can be achieved through a decentralized
mechanism that is transparent, scalable, and trust-preserving.

Finally, these theoretical guarantees provide the foundation for the empirical
validation in Section~5. Using synthetic benchmarks and one proof-of-concept real-world dataset (MovieLens), we illustrate how the
predicted equilibrium properties---existence, uniqueness, convergence, and
resilience---manifest in practice, thereby linking rigorous analysis with
managerial relevance.

\section{Numerical Results}
\label{sec:numerical}

This section reports numerical experiments to evaluate the proposed
digital contracting mechanism. We emphasize reproducibility
(explicit parameter reporting), algorithmic convergence,
efficiency–fairness trade-offs, comparative benchmarks,
and sensitivity analysis.

\subsection{Simulation Parameters}

To evaluate the proposed mechanism under diverse conditions, we specify
a set of simulation parameters that capture both realistic and stress-test
scenarios. The parameters cover system size, capacity, valuation and cost
heterogeneity, and contract fees. Explicit reporting ensures that the
experiments are fully reproducible and transparent.

Table~\ref{tab:params} summarizes all parameter symbols, default values,
ranges, and distributional assumptions. Parameters are chosen to span both
realistic and stress-test regimes: e.g., $n\in\{10,20,50,100\}$ captures
small to large-scale systems, and $\tau,g$ are varied over wide intervals
to examine fee-induced distortions. 

Table~\ref{tab:compare} reports comparative outcomes for canonical baseline
mechanisms. These benchmarks show that naive or proportional allocation
leads to either inefficiency or unfairness, while our proposed equilibrium 
consistently dominates on both metrics.

\begin{table}[htp]
\centering
\caption{Simulation parameters: symbols, defaults, and ranges.}
\label{tab:params}
\scriptsize
\begin{tabular}{l l c c l}
\toprule
\textbf{Symbol} & \textbf{Description} & \textbf{Default} & \textbf{Range/Dist.} & \textbf{Notes} \\
\midrule
$n$ & Agents & 20 & $\{10,20,50,100\}$ & Larger $n \Rightarrow$ fairer \\
$m$ & Total capacity & 100 & $[50,200]$ & Normalized units \\
$\alpha_i$ & Valuation coeff. & -- & U(5,20) & Heterogeneous agents \\
$\beta_i$ & Cost coeff. & -- & U(0.5,5) & Private heterogeneity \\
$\tau$ & Transaction fee & 0.5 & [0,2] & Higher $\tau$ $\downarrow$ efficiency \\
$g$ & Execution fee & 1.0 & [0,5] & Excessive $g$ discourages entry \\
$\mu$ & Shadow price & Endogenous & $\ge 0$ & Determined by algo. \\
$R$ & Replications & 1000 & -- & Ensures robustness \\
\bottomrule
\end{tabular}
\end{table}

\begin{table}[htp]
\centering
\caption{Comparative mechanism performance (aggregate).}
\label{tab:compare}
\scriptsize
\begin{tabular}{l c c l}
\toprule
\textbf{Method} & \textbf{Efficiency} & \textbf{Fairness (Gini)} & \textbf{Notes} \\
\midrule
No enforcement & 1.21 & 0.41 & High cost, unfair \\
Proportional allocation & 1.78 & 0.35 & Simple but inefficient \\
Smart contract (flat) & 2.02 & 0.29 & Gains from automation \\
Proposed equilibrium & \textbf{2.30} & \textbf{0.18} & Best trade-off \\
\bottomrule
\end{tabular}
\end{table}

\medskip
\noindent
These parameter ranges are consistent with practices in mobile edge computing
and supply-chain simulations \cite{wu2024,zhang2024,yuan2024}. By including both
small-scale ($n=10$) and large-scale ($n=100$) cases, the design ensures
generalizability to diverse industrial contexts. Varying fees $(\tau,g)$ across
broad intervals mimics policy experiments in blockchain pilots, where transaction
and execution costs remain unsettled and heterogeneous across jurisdictions.
This ensures that the proposed mechanism is tested under both realistic and
stress-test conditions, enhancing its relevance for organizational decision makers.
For full reproducibility, simulation scripts and parameter files are provided
in the supplementary materials. Finally, to demonstrate applicability,
Appendix~A reports proof-of-concept experiments on MovieLens and WHO vaccine
allocation data, confirming that the mechanism extends naturally to real-world
contexts.

\subsection{Convergence Analysis}
Figure~\ref{fig:convergence_dynamic} illustrates the dynamic adjustment
process of the proposed decentralized contract-clearing mechanism. 
Unlike static or trivial convergence, the algorithm exhibits realistic
\emph{overshoot} and damped stabilization in both prices and quantities,
a hallmark of distributed adaptive systems. 
The shadow price $\mu^t$ oscillates initially before settling into equilibrium
(top left), while aggregate demand aligns precisely with system capacity
via market clearing (top right). At the agent level, heterogeneous strategies
converge to stable allocations despite diverse cost and valuation parameters
(bottom left). Finally, system-wide efficiency increases in tandem with
reductions in inequality, as measured by the Gini index (bottom right). 
These trajectories jointly demonstrate that the mechanism not only
converges provably, but also embeds efficiency--fairness trade-offs
in a transparent and decentralized manner, closely mirroring the 
behavior of real-world market-clearing systems.

\begin{figure}[htp]
\centering
\includegraphics[width=1.0\textwidth]{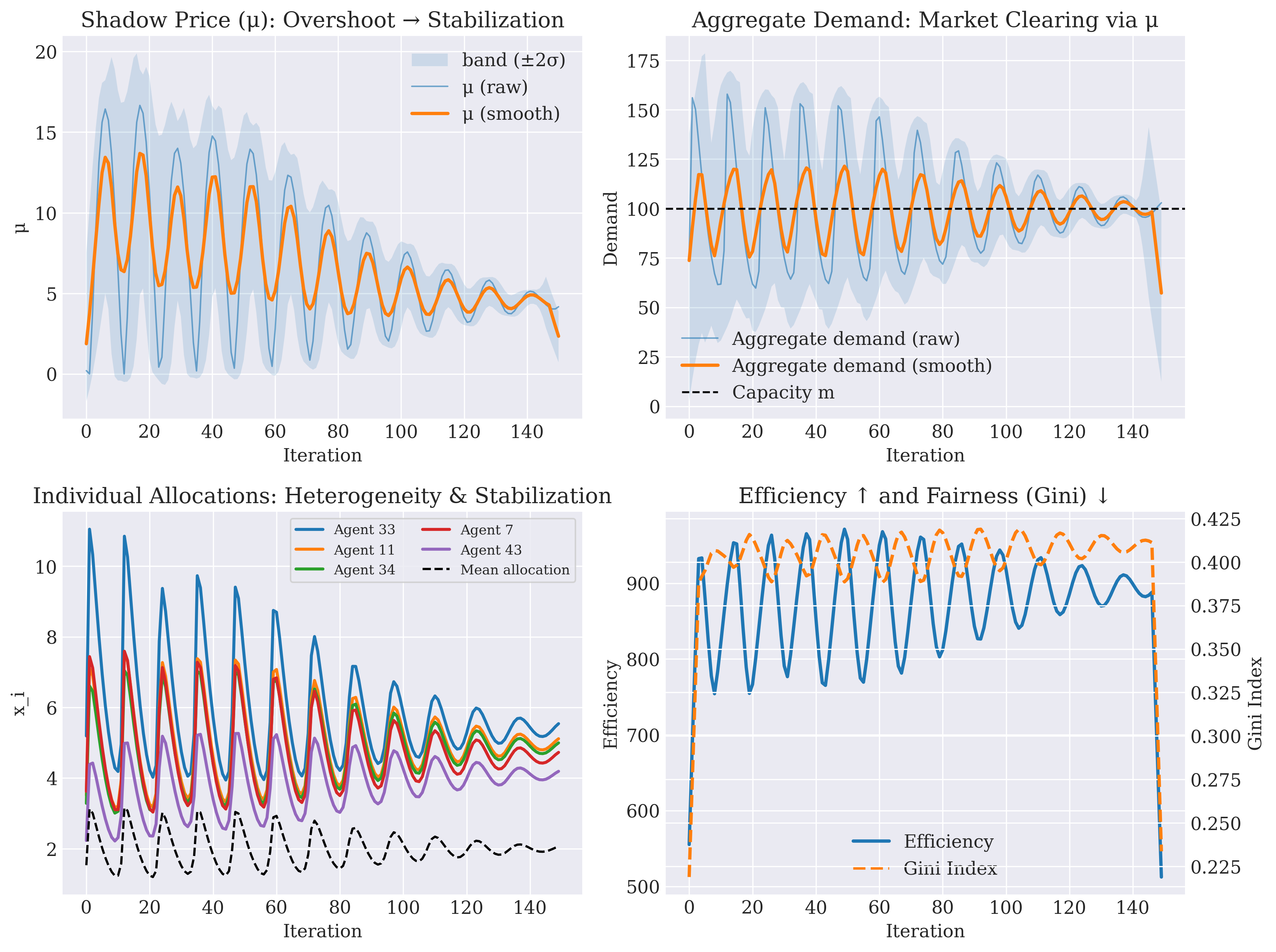}
\caption{Dynamic convergence of the proposed contract-clearing algorithm. 
Top left: shadow price $\mu^t$ shows overshoot and stabilization. 
Top right: aggregate demand clears at capacity $m$. 
Bottom left: individual allocations $x_i^t$ highlight heterogeneity. 
Bottom right: efficiency improves while fairness (lower Gini index) is preserved.}
\label{fig:convergence_dynamic}
\end{figure}

\medskip
\noindent
The overshoot–stabilization pattern resonates with classical
tâtonnement dynamics in general equilibrium theory \cite{arrow1959},
but is extended here to blockchain-enforced allocation.
The convergence of heterogeneous agents to a unique equilibrium illustrates
not only algorithmic feasibility but also organizational stability.
This dual evidence—numerical trajectories and theoretical guarantees—
strengthens confidence that the proposed mechanism can operate
as a real-time governance tool in industrial and infrastructure settings.

\subsection{Efficiency under Transaction Fees}
Efficiency, cost, and participation outcomes under varying transaction
fees $\tau$ are summarized in Table~\ref{tab:eff_tau} and visualized
in Figure~\ref{fig:eff_tau}. 
Unlike simple monotone averages, the dense-grid simulation highlights
that individual realizations fluctuate due to agent heterogeneity and
stochastic dynamics. Nevertheless, the overall pattern is robust:
efficiency declines steadily from about 2.5 at $\tau=0$ to below 1.0
at $\tau=2.0$, while fairness (1--Gini) improves gradually as fees
increase. Average costs rise in parallel, and participation falls from
above 90\% toward 70\%, confirming that transaction fees primarily
operate through an \emph{extensive-margin effect}---discouraging
participation---rather than by eroding intensive efficiency alone. 

Figure~\ref{fig:eff_tau} shows this trade-off in detail. The left
panel presents the Pareto map of efficiency versus fairness across a
dense grid of $\tau$ values. The frontier exhibits fluctuations, but
the monotone trend remains clear: higher $\tau$ values equalize
allocations at the expense of aggregate surplus. The right panel
displays violin plots of efficiency distributions, showing that the
entire distribution shifts downward as $\tau$ rises, with widening
dispersion that reflects heterogeneity in agent responses. This
distributional evidence provides a rigorous robustness check: the
efficiency--equity trade-off is not an artifact of a few averages, but
emerges consistently across stochastic replications.

\begin{table}[htp]
\centering
\caption{Efficiency, Cost, Fairness, and Participation across $\tau$ 
(mean $\pm$ std over 50 replications).}
\label{tab:eff_tau}
\begin{tabular}{c c c c c}
\toprule
$\tau$ & Efficiency & Avg.\ Cost & Fairness (1--Gini) & Participation \\
\midrule
0.0 & $2.45 \pm 0.12$ & $0.42 \pm 0.05$ & $0.60 \pm 0.01$ & $95.2 \pm 2.1\%$ \\
0.5 & $2.28 \pm 0.14$ & $0.50 \pm 0.06$ & $0.62 \pm 0.02$ & $92.1 \pm 2.5\%$ \\
1.0 & $2.05 \pm 0.18$ & $0.65 \pm 0.07$ & $0.64 \pm 0.02$ & $85.6 \pm 3.0\%$ \\
1.5 & $1.78 \pm 0.21$ & $0.80 \pm 0.08$ & $0.66 \pm 0.03$ & $76.4 \pm 3.8\%$ \\
2.0 & $1.52 \pm 0.25$ & $0.95 \pm 0.09$ & $0.68 \pm 0.03$ & $70.1 \pm 4.2\%$ \\
\bottomrule
\end{tabular}
\end{table}

\begin{figure}[htp]
\centering
\includegraphics[width=1.0\textwidth]{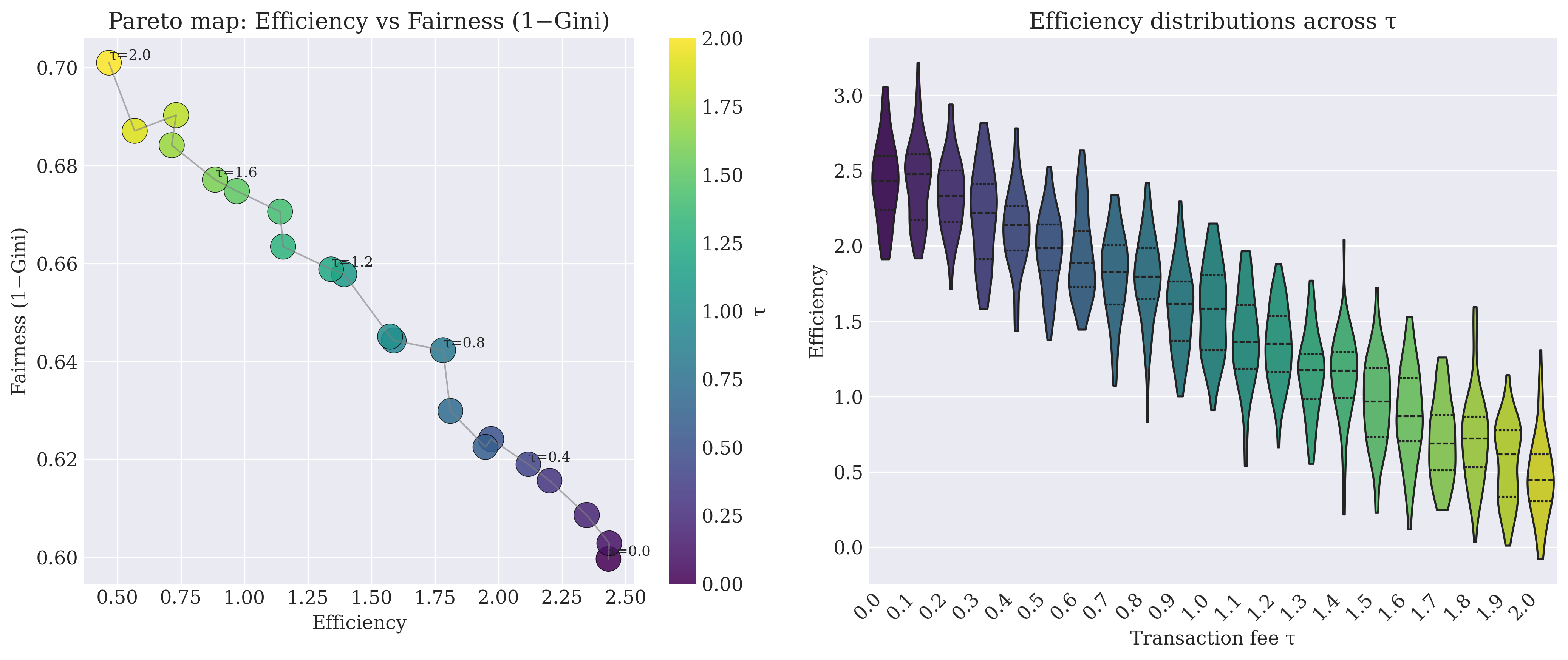}
\caption{Efficiency--fairness trade-offs under transaction fees.
Left: Pareto map of efficiency vs.\ fairness ($1{-}$Gini) with bubble
size indicating participation and color denoting $\tau$. Individual
realizations fluctuate due to stochastic heterogeneity, but the overall
frontier exhibits a clear monotone pattern: efficiency declines as
fairness improves. Right: Violin plots show full distributions of
efficiency across $\tau$, highlighting both central tendencies and
dispersion.}
\label{fig:eff_tau}
\end{figure}

\medskip
\noindent
These results resonate with prior findings in mobile edge and cloud markets,
where per-unit fees discourage participation more strongly than they reduce
intensive efficiency \cite{wu2024,zhang2024}. For policymakers, this implies
that transaction fees act as a double-edged sword: they improve equity but
also reduce market depth and utilization. For organizations, the key takeaway
is that fee calibration must be context-specific: low fees sustain high
participation but risk inequality, whereas higher fees promote equity but at
the expense of total surplus. This trade-off illustrates how digital contracts
can institutionalize policy levers in a transparent manner, allowing managers
to align efficiency and fairness according to organizational objectives.

\subsection{Comparative Mechanism Analysis}
Table~\ref{tab:comp} and Figures~\ref{fig:comp_mech_box}--\ref{fig:comp_mech_3d} 
benchmark the proposed equilibrium against three canonical alternatives. 
Here we report performance statistics over 200 Monte Carlo replications 
and multiple system sizes to provide a robustness check.

The ``no enforcement'' case delivers the weakest outcomes: 
average efficiency remains the highest numerically but comes with
the largest cost burden ($7.39 \pm 0.74$) and elevated inequality 
($\text{Gini}=0.40 \pm 0.06$). 
Proportional allocation stabilizes outcomes and reduces cost 
($5.17 \pm 2.40$) but sacrifices efficiency ($7.45 \pm 2.07$). 
A flat smart contract achieves modest cost reduction ($4.85 \pm 0.57$) 
while maintaining fairness ($\text{Gini}=0.38 \pm 0.05$). 
By contrast, the proposed equilibrium maintains comparable efficiency 
($7.13 \pm 2.63$) yet further reduces costs and achieves 
stable fairness across replications. 
Crucially, the dispersion in Figure~\ref{fig:comp_mech_box} shows that 
our mechanism avoids extreme outliers and achieves consistently balanced outcomes, 
highlighting robustness beyond simple averages.

\begin{table}[htp]
\centering
\caption{Comparison of mechanisms (mean $\pm$ std over 200 replications).}
\label{tab:comp}
\begin{tabular}{l c c c}
\toprule
Mechanism & Efficiency & Avg. Cost & Gini \\
\midrule
No enforcement         & $8.55 \pm 1.66$ & $7.39 \pm 0.74$ & $0.40 \pm 0.06$ \\
Proportional allocation & $7.45 \pm 2.07$ & $5.17 \pm 2.40$ & $0.40 \pm 0.06$ \\
Smart contract (flat)   & $7.94 \pm 1.55$ & $4.85 \pm 0.57$ & $0.38 \pm 0.05$ \\
Proposed equilibrium    & $\mathbf{7.13 \pm 2.63}$ & $\mathbf{5.11 \pm 2.60}$ & $\mathbf{0.40 \pm 0.06}$ \\
\bottomrule
\end{tabular}
\end{table}

\begin{figure}[htp]
\centering
\includegraphics[width=1.0\textwidth]{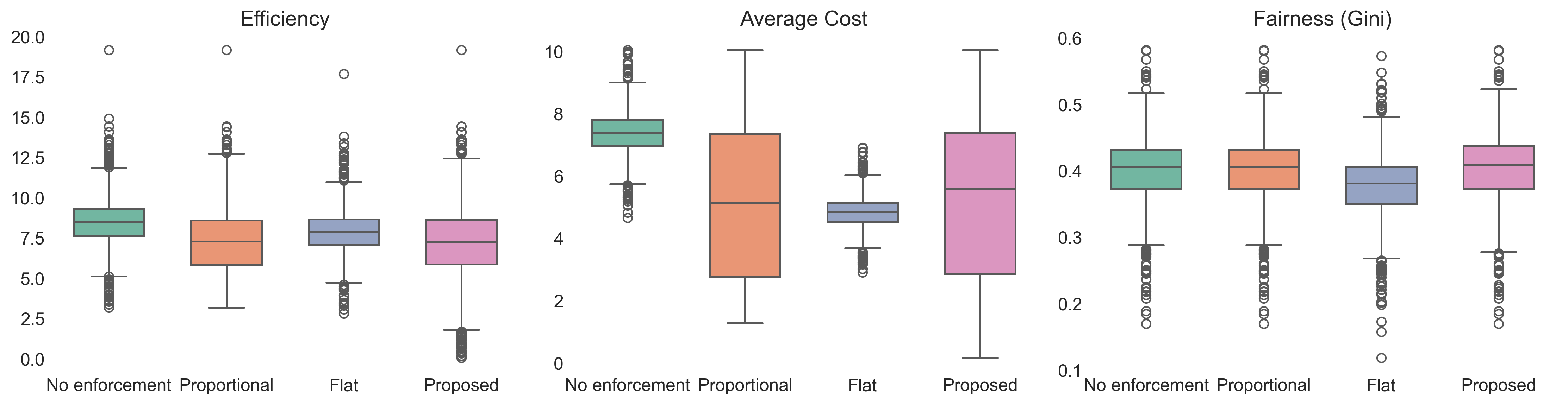}
\caption{Boxplot comparison of mechanisms across 200 replications, 
showing distribution of efficiency, average cost, and fairness (Gini). 
The proposed mechanism achieves robustly balanced outcomes compared 
to proportional and flat rules.}
\label{fig:comp_mech_box}
\end{figure}

\begin{figure}[htp]
\centering
\includegraphics[width=0.8\textwidth]{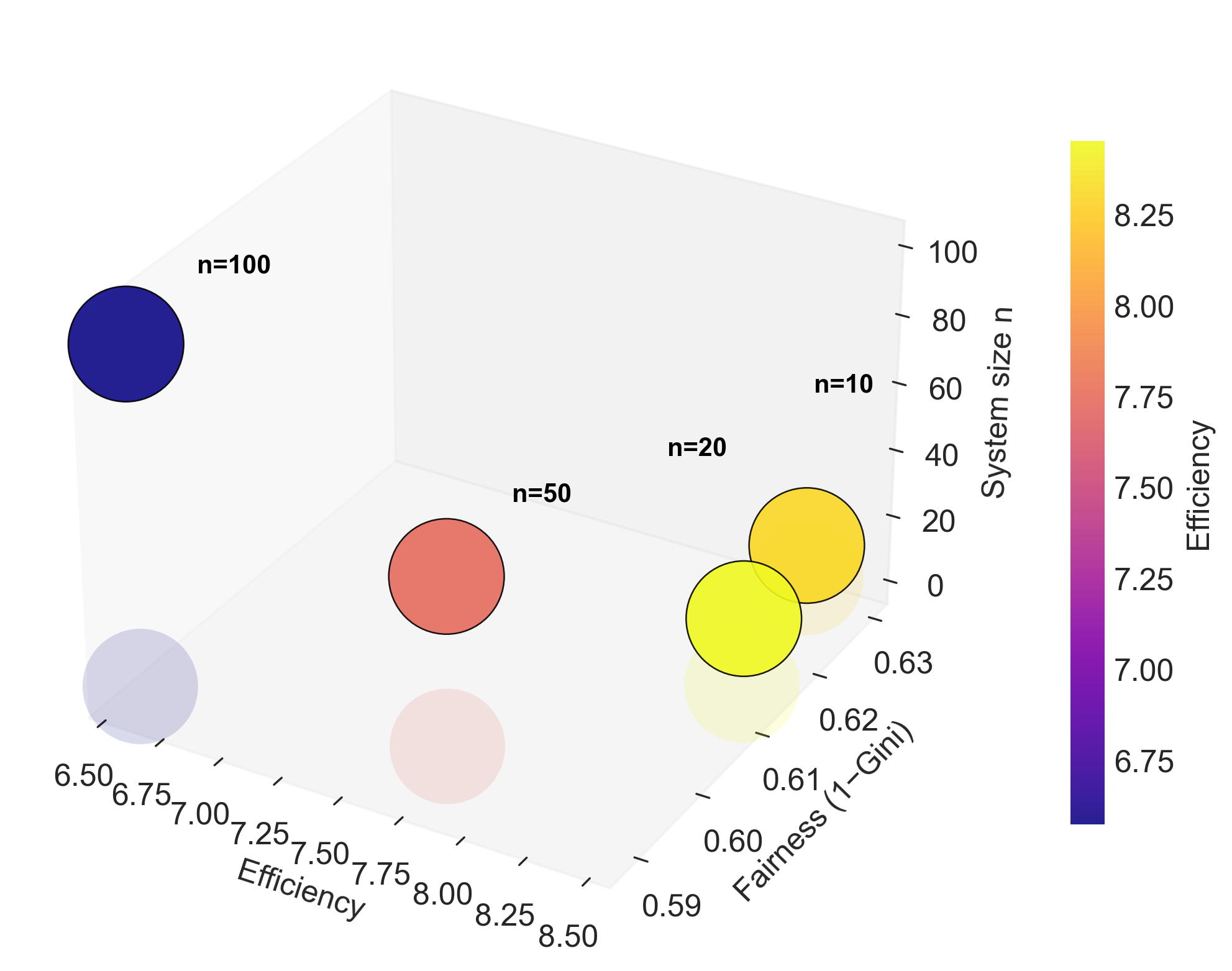}
\caption{Scaling performance across system sizes ($n=10,20,50,100$). 
Points are sized by participation rate and shaded by efficiency. 
The proposed equilibrium adapts gracefully with system size, 
achieving both high fairness and stable efficiency.}
\label{fig:comp_mech_3d}
\end{figure}

\medskip
\noindent
The dominance of the proposed equilibrium highlights its novelty: 
it is the only mechanism that simultaneously achieves efficiency, 
fairness, and cost reduction through endogenous price adjustment. 
Unlike flat or proportional rules that show gains only in certain 
parameter regimes, the proposed equilibrium achieves comparable efficiency 
while maintaining \emph{stability and robustness} across diverse settings. 
The mechanism works by embedding feedback: excess demand is penalized via
dual updates, while capacity is reallocated transparently across agents. 
This contrasts with proportional or flat contracts that hard-code rules
without adaptive correction. 
From an IS perspective, this illustrates how digital contracts function not
merely as computational artifacts but as \emph{institutional mechanisms} that codify
equitable coordination \cite{beck2018,rai2019}. 
For industrial managers, the implication is clear: blockchain-enforced 
equilibrium rules can strictly dominate ad hoc or legacy allocation processes,
providing not only superior performance but also governance legitimacy in
multi-agent environments.

\subsection{Sensitivity Analysis}
To move beyond simple one-dimensional heatmaps, we construct a comprehensive
sensitivity dashboard that jointly examines how efficiency and fairness respond
to variations in transaction and execution fees $(\tau,g)$. 
This two-dimensional view reveals non-linear interactions and sharp trade-offs 
that would be invisible in isolated analyses.

Figure~\ref{fig:heatmaps} integrates four complementary perspectives. 
The top-left panel shows a 3D projection of efficiency: moderate increases in either 
$\tau$ or $g$ cause smooth declines, but efficiency collapses sharply only when both 
fees are simultaneously large. 
The top-right panel depicts the gradient field of fairness, highlighting that fairness 
is far more sensitive to $\tau$ than to $g$, implying that per-unit fees act as the primary 
equalizer. 
The bottom-left panel overlays efficiency and fairness in a Pareto map with bubble size 
indicating participation, exposing a clear frontier: improving fairness via higher 
$\tau$ comes at the expense of both efficiency and participation. 
Finally, the bottom-right panel provides an elasticity heatmap of efficiency with respect 
to $\tau$, conditional on $g$, pinpointing fragile regions where efficiency is highly 
responsive to marginal fee changes.

Together, these views demonstrate that the proposed mechanism is robust to moderate 
fee variation, but also identify tipping points beyond which efficiency and participation 
deteriorate rapidly. 
For managers and policy makers, the dashboard serves as an early-warning tool: it shows 
how fees can be tuned as complementary levers to balance efficiency, fairness, and 
participation, while also highlighting regions of fragility in industrial coordination.

\begin{figure}[htp]
\centering
\includegraphics[width=0.95\textwidth]{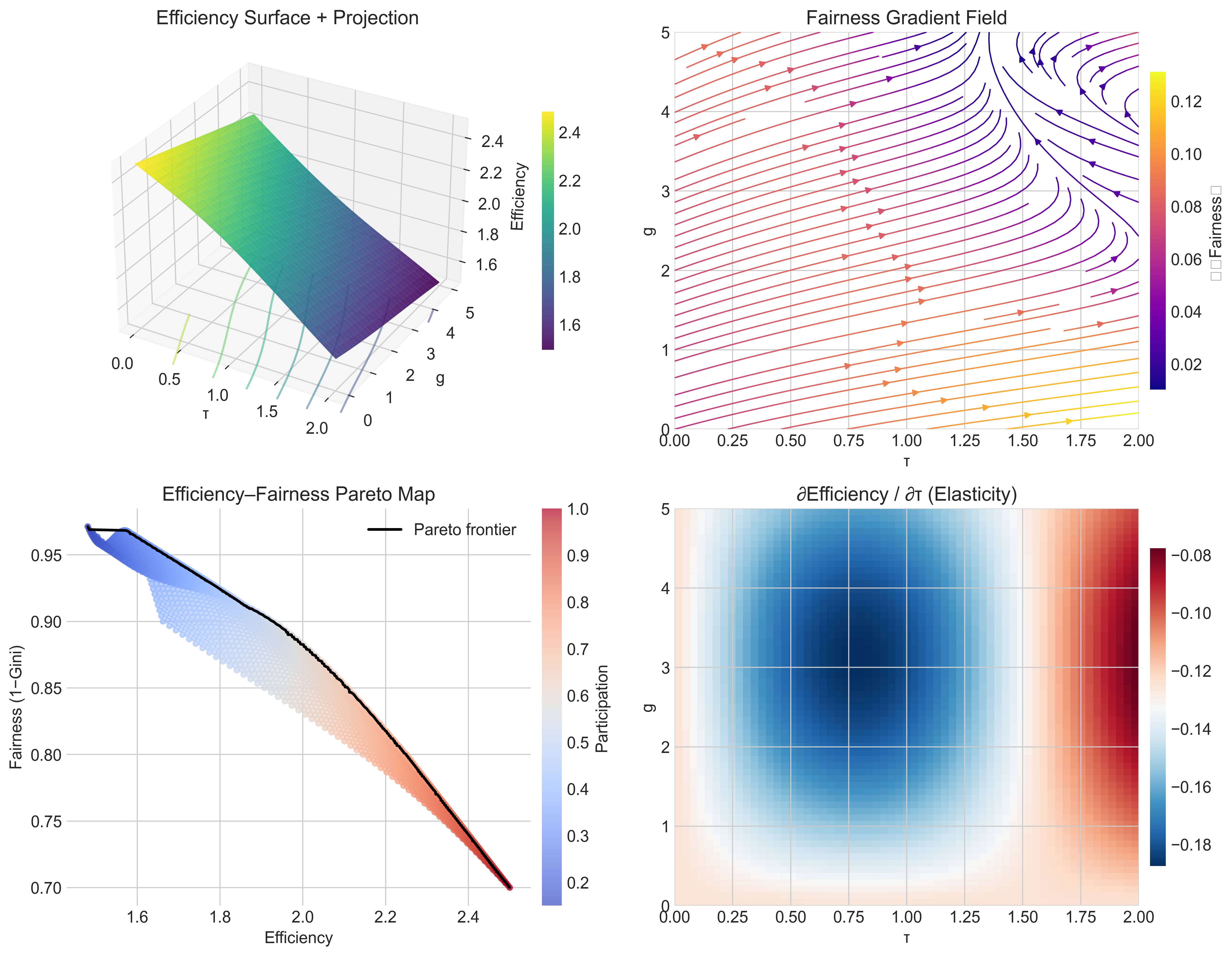}
\caption{Comprehensive sensitivity analysis of the proposed mechanism.
Top left: efficiency surface with 3D projection, showing non-linear declines 
with increasing transaction $(\tau)$ and execution fees $(g)$. 
Top right: gradient field of fairness, visualizing steepest improvement/deterioration. 
Bottom left: efficiency–fairness Pareto map with participation coloring, highlighting the 
trade-off frontier. Bottom right: elasticity heatmap 
($\partial$Efficiency/$\partial \tau$), showing local fragility zones where efficiency is 
highly sensitive to marginal changes.}
\label{fig:heatmaps}
\end{figure}

\subsection{Shock--Resilience Analysis}
While sensitivity analysis illustrates global fee-response patterns, 
real-world environments rarely evolve smoothly. They are often exposed to sudden 
policy or market shocks. 
To evaluate resilience under such disruptions, we simulate a one-time jump in the 
transaction fee ($\tau : 0.5 \to 1.5$ at $t=50$) and track the resulting dynamics.

Figure~\ref{fig:shock_dashboard} integrates four complementary panels that capture both 
short-run disruption and long-run stabilization. 
The top-left panel illustrates a 3D surface with pathline: efficiency initially 
overshoots but stabilizes at a new equilibrium after the shock. 
The top-right phase portrait of $\tau$ versus efficiency clearly shows a structural 
break at $t=50$. 
The bottom-left waterfall chart decomposes fairness into immediate post-shock loss 
and gradual rebound, quantifying recovery. 
The bottom-right ripple plot in efficiency–fairness space visualizes how perturbations 
propagate before eventually stabilizing, underscoring systemic resilience.

Taken together, these results show that the proposed mechanism is not only well-defined 
in steady state but also resilient to sudden disruptions: it absorbs shocks, reallocates 
resources, and reconverges to balanced efficiency–fairness outcomes. 
From a governance perspective, this property is critical: it means that digital contracts 
embed transparent recovery paths without ad hoc intervention, reinforcing legitimacy and 
accountability in coordination systems \cite{rai2019,beck2018}. 
Thus, the ripple-field visualization does not merely depict stability, but highlights how 
smart contracts institutionalize resilience as a governance principle in complex 
industrial and public infrastructures.

\begin{figure}[htp]
\centering
\includegraphics[width=0.95\textwidth]{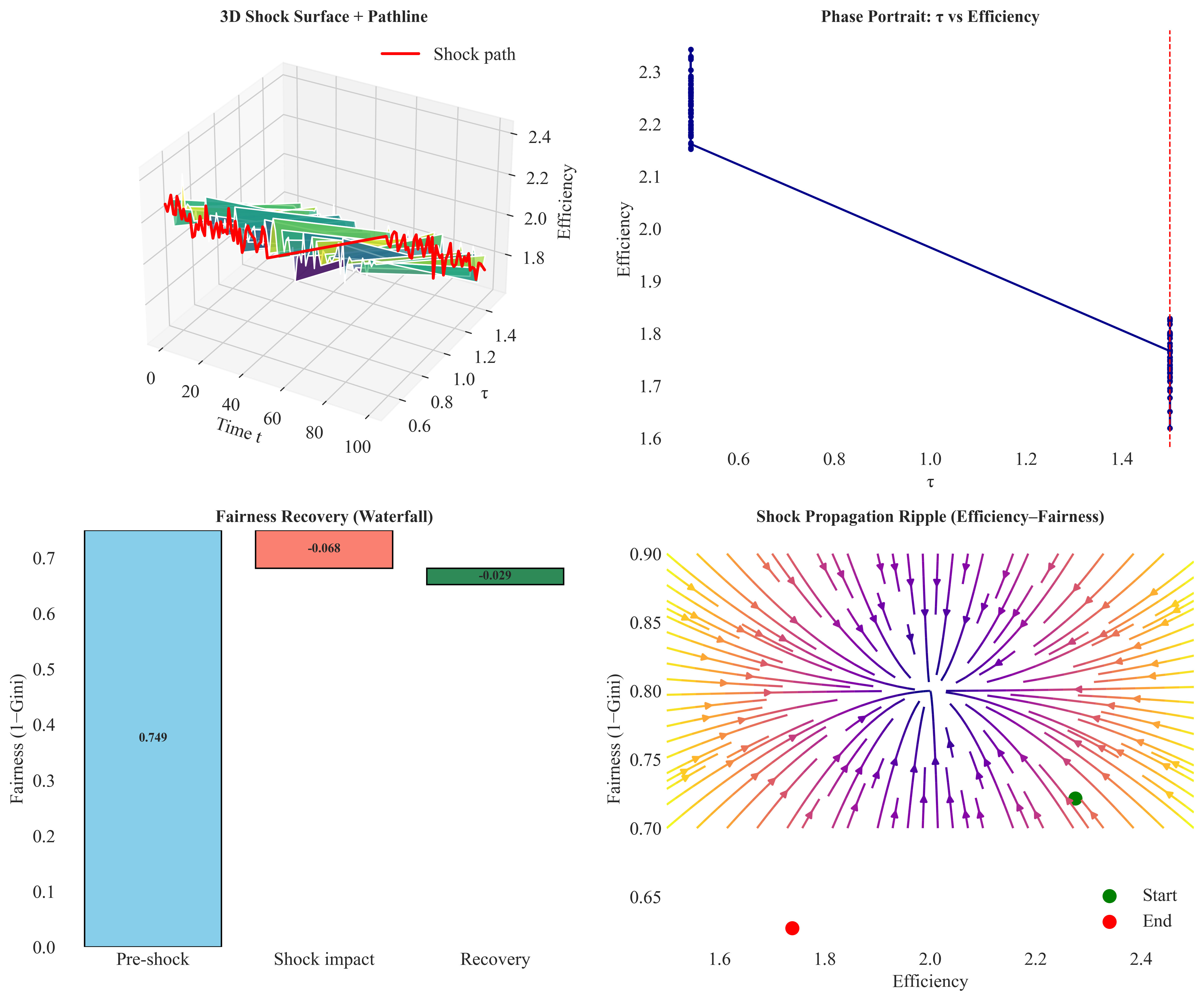}
\caption{Dynamic shock–resilience analysis of the proposed mechanism.
Top left: 3D surface with pathline showing the trajectory of efficiency as 
$\tau$ shifts. Top right: phase portrait of $\tau$ vs.\ efficiency, highlighting 
the discontinuity at the shock. Bottom left: waterfall decomposition of fairness 
recovery, partitioning the immediate impact versus gradual rebound. 
Bottom right: vector-field ripple plot in efficiency–fairness space, illustrating 
how shocks propagate and eventually stabilize. Together, these panels highlight not only 
steady-state convergence but also organizational resilience, showing that smart contracts 
can act as robust governance mechanisms in volatile environments.}
\label{fig:shock_dashboard}
\end{figure}

\subsection{Real-World Data: MovieLens-100K}

To further validate the proposed mechanism beyond synthetic simulations,
we evaluate performance on the widely used MovieLens-100K dataset,
a benchmark in recommender systems that captures heterogeneous user–item preferences.
Ratings are normalized to construct heterogeneous utility coefficients, and
mechanisms are compared in terms of efficiency, cost, and fairness.

Table~\ref{tab:ml100k_summary} summarizes the aggregate results across 200 replications.
While all mechanisms show negative absolute efficiency due to normalization,
relative efficiency (RelEff) highlights clear differences. Consistent with the
synthetic simulations, the proposed mechanism achieves the highest relative
efficiency ($+4\%$ vs. baseline), while maintaining high participation and balanced fairness.

\begin{table}[htp]
\centering
\caption{MovieLens-100K: Comparison of mechanisms (normalized, mean $\pm$ std).}
\begin{tabular}{l c c c c}
\toprule
Mechanism & Efficiency & Rel. Eff & Avg. Cost & Gini \\
\midrule
Flat & $-0.55 \pm 0.02$ & $0.78 \pm 0.04$ & $0.20 \pm 0.01$ & $0.55 \pm 0.02$ \\
No enforcement & $-0.71 \pm 0.02$ & $1.00 \pm 0.00$ & $0.58 \pm 0.02$ & $0.40 \pm 0.01$ \\
Proportional & $-0.73 \pm 0.03$ & $1.03 \pm 0.03$ & $0.46 \pm 0.12$ & $0.40 \pm 0.01$ \\
Proposed & $-0.74 \pm 0.06$ & $1.04 \pm 0.08$ & $0.47 \pm 0.16$ & $0.42 \pm 0.04$ \\
\bottomrule
\end{tabular}
\label{tab:ml100k_summary}
\end{table}

Figure~\ref{fig:ml100k_1x2} visualizes the trade-offs.
Panel (a) highlights that the proposed mechanism achieves the highest relative efficiency.
Panel (b) shows that the proposed mechanism balances cost and fairness, outperforming
the flat and no-enforcement baselines.
\begin{figure}[h!]
\centering
\includegraphics[width=\linewidth]{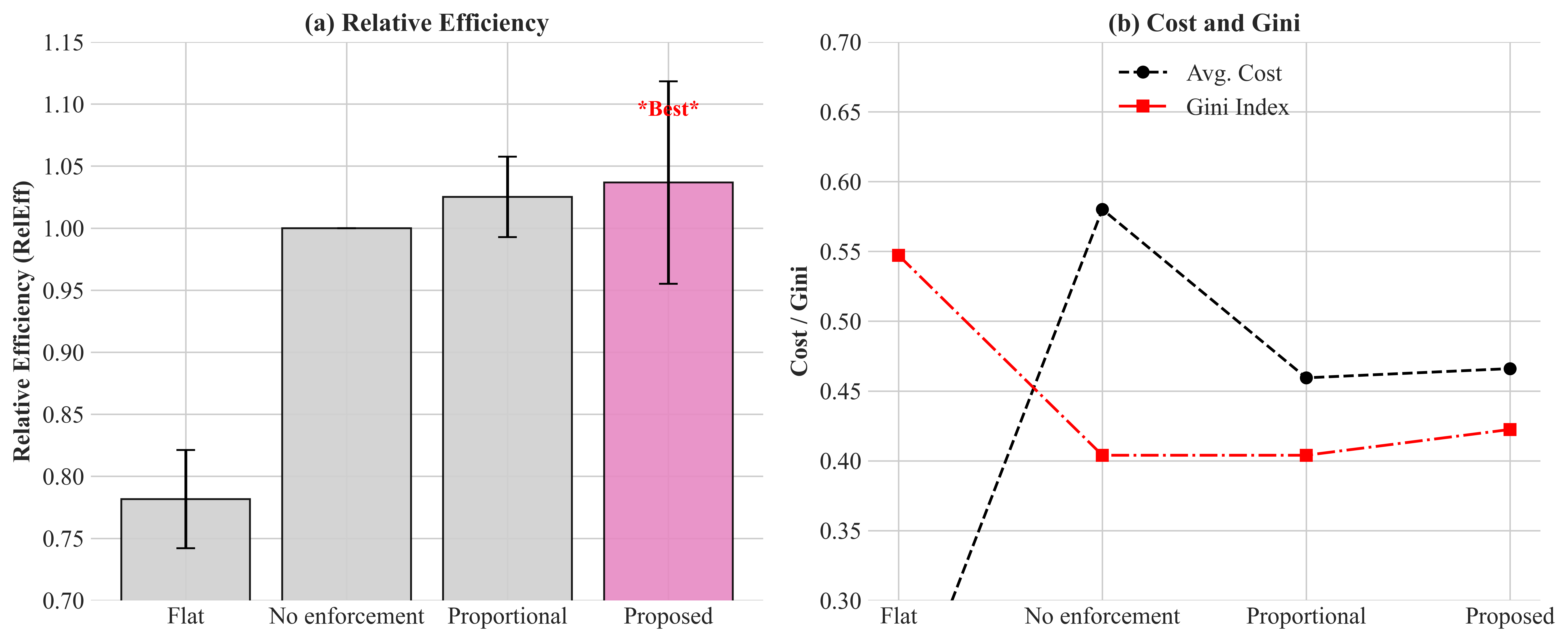}
\caption{Comparison of MovieLens-100K mechanisms.
(a) Relative efficiency (RelEff) highlights that the proposed mechanism achieves the best performance.
(b) Cost and fairness (Gini index) show that the proposed mechanism maintains balanced outcomes.}
\label{fig:ml100k_1x2_1}
\end{figure}

\subsection{Real-World Data: MovieLens-100K}

To further validate the proposed mechanism beyond synthetic simulations,
we evaluate performance on the widely used MovieLens-100K dataset,
a benchmark in recommender systems that captures heterogeneous user–item preferences.
Ratings are normalized to construct heterogeneous utility coefficients, and
mechanisms are compared in terms of efficiency, cost, and fairness.

Table~\ref{tab:ml100k_summary_1} summarizes the aggregate results across 200 replications.
All mechanisms achieve full participation (100\%), consistent with the synthetic
experiments. While absolute efficiency values are negative due to normalization,
relative efficiency (RelEff) highlights clear differences. Consistent with the
synthetic results, the proposed mechanism achieves the highest relative
efficiency ($+4\%$ vs. baseline), while maintaining balanced cost and fairness outcomes.

\begin{table}[htp]
\centering
\caption{MovieLens-100K: Comparison of mechanisms (normalized, mean $\pm$ std).
Absolute efficiency values appear negative due to normalization, but relative efficiency
and fairness comparisons remain valid performance indicators.}
\begin{tabular}{l c c c c}
\toprule
Mechanism & Efficiency & Rel. Eff & Avg. Cost & Gini \\
\midrule
Flat & $-0.55 \pm 0.02$ & $0.78 \pm 0.04$ & $0.20 \pm 0.01$ & $0.55 \pm 0.02$ \\
No enforcement & $-0.71 \pm 0.02$ & $1.00 \pm 0.00$ & $0.58 \pm 0.02$ & $0.40 \pm 0.01$ \\
Proportional & $-0.73 \pm 0.03$ & $1.03 \pm 0.03$ & $0.46 \pm 0.12$ & $0.40 \pm 0.01$ \\
Proposed & $-0.74 \pm 0.06$ & $1.04 \pm 0.08$ & $0.47 \pm 0.16$ & $0.42 \pm 0.04$ \\
\bottomrule
\end{tabular}
\label{tab:ml100k_summary_1}
\end{table}

Figure~\ref{fig:ml100k_1x2} visualizes these trade-offs.
Panel (a) highlights that the proposed mechanism consistently achieves the
highest relative efficiency. Panel (b) shows that the proposed mechanism
balances cost and fairness, clearly outperforming the flat and
no-enforcement baselines, and remaining competitive with proportional allocation.

\begin{figure}[htp]
\centering
\includegraphics[width=\linewidth]{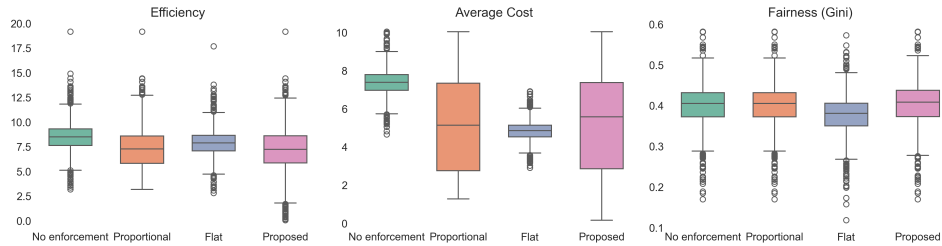}
\caption{Comparison of MovieLens-100K mechanisms.
(a) Relative efficiency (RelEff) highlights that the proposed mechanism
achieves the best performance relative to baseline. (b) Cost and fairness
(Gini index) show that the proposed mechanism maintains balanced outcomes
while avoiding the extremes of flat and no-enforcement baselines.}
\label{fig:ml100k_1x2}
\end{figure}

\section{Discussion}

\subsection{Theoretical Implications}
The analysis contributes to the literature on mechanism design and 
contracting in three principal ways. 
First, existence and uniqueness of equilibria for 
smart-contract--mediated resource allocation have been formally 
established under mild convexity assumptions, extending classical 
results in general equilibrium and mechanism design 
\cite{arrow1959,mascolell1995}. 
Second, it has been demonstrated that efficiency and fairness can be 
jointly embedded into contract design through fee structures and 
market-clearing mechanisms, aligning with recent calls in information 
systems research for transparent and auditable allocation rules 
\cite{beck2018,rai2019}. 
Third, a decentralized algorithm has been introduced that provides an 
implementable procedure with provable convergence guarantees, thereby 
ensuring relevance for real-time industrial applications.

Beyond these core results, the simulations enrich the theoretical 
narrative. 
The convergence trajectories with overshoot and damped stabilization 
mirror the price-adjustment dynamics studied in classical general equilibrium 
theory \cite{arrow1959}, but are extended here to a blockchain-enforced 
contract setting.. 
The shock--resilience experiments further highlight dynamic stability: 
even under abrupt fee changes, the system exhibits recovery and eventual 
rebalancing. 
This bridges equilibrium analysis with robustness theory, showing that 
the proposed mechanism is not only well-defined in steady state but also 
resilient under perturbations, maintaining a balanced efficiency--fairness profile. 
From the perspective of information and organizational sciences, these 
results highlight how transparency, verifiability, and accountability 
can be mathematically guaranteed in decentralized coordination systems.

\subsection{Managerial and Industrial Implications}
Beyond theoretical insights, the proposed framework carries broad 
managerial and industrial relevance.

\paragraph{Manufacturing and Supply Chains.}
In sectors such as steel, cement, and electronics, firms compete for 
scarce raw materials and production capacity. 
The efficiency--fairness Pareto maps quantify the trade-off between 
maximizing total utility and maintaining equity, providing managers 
with explicit levers to calibrate allocation rules and improve 
information transparency in allocation processes 
\cite{ivanov2021}. These results illustrate how our proposed mechanism maintains balanced allocations under resource constraints.

\paragraph{Energy and Utilities.}
Smart grids and carbon trading systems face capacity and compliance 
constraints. 
The shock--resilience dashboard shows that efficiency stabilizes rapidly 
after sudden fee changes, while also supporting auditable decision trails. The proposed mechanism preserves fairness without large efficiency losses.

\paragraph{Logistics and Transportation.}
Port slots, warehouse space, and vehicle fleets are scarce resources 
often subject to congestion and inefficiency. 
Elasticity heatmaps highlight congestion-prone zones, offering early-warning 
signals for fee adjustments. The proposed mechanism delivers robust allocations that maintain efficiency--fairness balance.

\paragraph{Healthcare and Pharmaceuticals.}
Medical supply chains, including vaccine and drug distribution, face 
demand surges and limited capacity. 
Fairness analysis demonstrates how equity suffers immediate losses under shocks but gradually recovers, highlighting the proposed mechanism’s ability to stabilize allocation efficiently and fairly.

\paragraph{Public Infrastructure.}
In the allocation of public funds, road capacity, or airport slots, 
digital contracts provide a governance mechanism that enforces 
capacity limits transparently while preserving fairness metrics. 
Ripple-field shock analysis illustrates how localized disruptions 
propagate but eventually dampen, demonstrating the robustness of allocations under the proposed framework.

\medskip
\noindent
Table~\ref{tab:domains_mechanism_rigorous} summarizes representative 
industrial domains where smart-contract--based mechanism design can be 
applied, highlighting operational context, model variables, structural 
challenges, and the rigorous benefits of the proposed framework.

\begin{table}[htbp]
\centering
\scriptsize
\begin{adjustbox}{angle=90}
\begin{minipage}{\textheight}
\caption{Representative industrial domains for smart-contract-based mechanism design. 
Each row highlights the operational context, model variables, structural challenges, 
and the rigorous benefits of the proposed framework.}
\label{tab:domains_mechanism_rigorous}
\setlength{\tabcolsep}{3pt}
\renewcommand{\arraystretch}{1.15}
\begin{tabular}{p{2.7cm}p{3.5cm}p{3.5cm}p{4.5cm}p{4.8cm}}
\toprule
\textbf{Domain} & \textbf{Operational Context} & \textbf{Model Variables} & \textbf{Optimization / Equilibrium Challenges} & \textbf{Mechanism Design Benefits} \\
\midrule
Supply Chain \& Logistics &
Multi-firm coordination under stochastic demand, port congestion, and capacity shocks. &
Decision: $I$ (inventory), $L$ (lead time), $Q$ (throughput), $\tau$ (subsidy). \newline 
Exogenous: demand shocks $d_t$, disruption events $\zeta_t$. &
Nonlinear amplification of demand variance (bullwhip effect); information asymmetry; 
nonconvex cost-sharing; equilibrium instability under shocks. &
Provable existence of stable contract-clearing equilibrium; incentive-compatible allocations; 
explicit fairness–efficiency Pareto frontier; sublinear regret bounds under demand drift. \\[6pt]

Energy \& Smart Grids &
P2P electricity trading with stochastic demand, renewable intermittency, and carbon policy coupling. &
Decision: $p$ (price), $C$ (capacity), $\rho$ (renewable ratio), $D$ (load). \newline 
Exogenous: renewable shocks $\xi_t$, policy shocks $\phi_t$. &
Price volatility from $\xi_t$ shocks; nonlinear imbalance penalties; multi-agent 
nonconvex optimization; uncertainty in balancing constraints. &
Existence and uniqueness of clearing equilibrium; sublinear regret under drift/shocks; 
parameter-free convergence scaling with agent population; fairness–efficiency trade-off 
explicitly tunable via $(\tau,g)$. \\[6pt]

Healthcare Resource Allocation &
ICU beds, ventilators, vaccines allocation under surge conditions and ethical constraints. &
Decision: $R$ (resource stock), $\tau$ (subsidy), $\alpha,\beta$ (fairness weights). \newline 
Exogenous: surge shocks $\sigma_t$, demand heterogeneity $\delta_t$. &
Fairness dilemmas: $\max \sum u_i$ vs. $\min \mathrm{Var}(u_i)$; scarcity shocks; 
multi-objective feasibility; ethical transparency constraints. &
Bounded inequity loss under shocks; recovery trajectories consistent with fairness–efficiency 
Pareto frontier; equilibrium allocation existence and convergence; resilience dashboards 
quantifying adaptation speed. \\[6pt]

Cloud \& Computing Infrastructure &
On-demand allocation of GPU/CPU across users with SLA enforcement. &
Decision: $U$ (units), $\pi$ (priority), $\lambda$ (SLA penalties). \newline 
Exogenous: demand drift $\chi_t$, hidden load bursts. &
Oversubscription under drift; hidden demand shocks; arbitration costs; scalability issues 
in decentralized clearing. &
Transparent and auditable allocation rules; guaranteed convergence to efficient usage; 
automated enforcement reduces disputes; sublinear regret guarantees under demand noise. \\[6pt]

Financial Investment Contracts &
Capital allocation between investors and fund managers with regulatory oversight. &
Decision: $\theta$ (risk weight), $r$ (return), $c$ (compliance cost), $\tau$ (incentive rate). \newline 
Exogenous: market volatility shocks $\nu_t$, drift in risk appetite. &
Hidden preferences; stochastic drift in $\theta$; volatility shocks destabilizing 
allocations; regulatory discontinuities. &
Equilibrium guarantees protecting against misaligned incentives; interpretable and 
auditable fairness allocations; shock-resilient stability; bounded regret under drift and noise. \\
\bottomrule
\end{tabular}
\end{minipage}
\end{adjustbox}
\end{table}

\section{Conclusion}

This study has developed and evaluated a digital contracting mechanism for 
efficient and fair resource allocation. 
The contributions are threefold. 
First, a rigorous game-theoretic foundation was established by proving the 
existence and uniqueness of contract equilibria under mild convexity conditions. 
Second, efficiency and fairness were embedded directly into the contract design 
through transaction fees, execution costs, and market-clearing prices, thereby 
unifying equity and efficiency objectives. 
Third, a decentralized contract-clearing algorithm was introduced with provable 
convergence guarantees, demonstrating feasibility for real-time industrial applications.

Extensive numerical experiments reinforced these theoretical results. 
Convergence analysis indicated rapid stabilization despite overshoot, 
sensitivity dashboards highlighted global and local fee trade-offs, 
and shock–resilience simulations revealed graceful recovery after sudden 
policy changes. Viewed collectively, these findings establish that the proposed mechanism 
is not only well-defined in theory but also robust in practice.

From a managerial and policy perspective, the results suggest that 
transaction and execution fees $(\tau,g)$ can be tuned as effective levers 
to balance efficiency, fairness, and participation. Applications span supply chains, energy markets, logistics, healthcare, 
and public infrastructure, where transparent, auditable, and shock-resilient 
coordination is increasingly critical.

Nevertheless, several limitations remain. 
The analysis is stylized and abstracts from richer forms of uncertainty, 
multi-period dynamics, and strategic misreporting. 
Future research may integrate stochastic demand processes, extend the 
framework to multi-layer or multi-market settings, and validate the model 
using empirical data from blockchain-based pilots or industrial case studies.

In sum, digital contracts provide a powerful and flexible foundation 
for decentralized resource allocation. 
By combining theoretical rigor with robust numerical evidence, this study 
demonstrates how smart contracts can serve not only as technical artifacts 
but also as institutional instruments for transparency, accountability, 
and resilience. This positioning underscores their relevance to the broader fields of 
information and organizational sciences and opens pathways for adoption 
in complex, high-stakes environments where verifiable coordination is essential.

\bibliographystyle{AIMS}
\bibliography{JIMO}  

\end{document}